\def\ga{{\mathcal{G}}} %gioco indotto da un certo grafo (prende come parametro il grafo)
\def\outcome{{\mathcal{C}}}
\def\pne{{\sf N}} %insieme degli equilibri di Nash puri  
\def\sne[#1]{{#1}{\sf -SN}} %insieme degli strong equilibri di Nash
\def\core{{\sf CR}} %insieme dei Core  
\def\score{{\sf SCR}} %insieme dei Strict Core
\def\poa{{\sf PoA}}
\def\pos{{\sf PoS}}
\def\spoa[#1]{{#1}{\sf -SPoA}}
\def\cpoa{{\sf CPoA}}
\def\spos[#1]{{#1}{\sf -SPoS}}
\def\cpos{{\sf CPoS}}
\def\SW{{\sf SW}}
\def\bbar(#1){\bar{\bar{#1}}}
\newtheorem{theorem}{Theorem}
\newtheorem{lemma}[theorem]{Lemma}
\newtheorem{proposition}[theorem]{Proposition}
\newtheorem{corollary}[theorem]{Corollary}
\begin{document}

\title{Stable Outcomes in Modified Fractional Hedonic Games}  % put your title here!
%\titlenote{Produces the permission block, and copyright information}

% AAMAS: as appropriate, uncomment one subtitle line; see camera ready instructions
%\subtitle{Extended Abstract}
%\subtitle{Industrial Applications Track}
%\subtitle{Socially Interactive Agents Track}
%\subtitle{Blue Sky Ideas Track}
%\subtitle{Robotics Track}
%\subtitle{JAAMAS Track}
%\subtitle{Doctoral Consortium}

%\subtitlenote{The full version of the author's guide is available as \texttt{acmart.pdf} document}

% replace this with your author block!
\author{Gianpiero Monaco,$^{1}$ Luca Moscardelli,$^{2}$ Yllka Velaj$^{2}$\\
\normalsize{$^{1}$University of L'Aquila,}\\
\normalsize{$^{2}$University of Chieti-Pescara}
}
%\author{Gianpiero Monaco}
%\affiliation{%
%  \institution{University of L'Aquila}
%  \city{L'Aquila} 
%  \state{Italy} 
%  \postcode{67100}
%}
%\email{gianpiero.monaco@univaq.it}
%\author{Luca Moscardelli}
%\affiliation{%
%  \institution{University of Chieti-Pescara}
%  \city{Pescara} 
%  \state{Italy} 
%  \postcode{65127}
%}
%\email{luca.moscardelli@unich.it}
%\author{Yllka Velaj}
%\affiliation{%
%  \institution{University of Chieti-Pescara}
%  \city{Pescara} 
%  \state{Italy} 
%  \postcode{65127}
%}
%\email{yllka.velaj@unich.it}

\date{}

%\keywords{Coalition Formation Games; Hedonic Games; Nash Equilibrium; Core; Price of Anarchy; Price of Stability}  % put your semicolon-separated keywords here!

\maketitle

\begin{abstract}  % put your abstract here!
In \emph{coalition formation games} self-organized coalitions are created as a result of the strategic interactions of independent agents. 
For each couple of agents $(i,j)$, weight $w_{i,j}=w_{j,i}$ reflects how much agents $i$ and $j$ benefit from belonging to the same coalition.
We consider the {\em modified fractional hedonic game}, that is a coalition formation game in which agents' utilities are such that the total benefit of agent $i$ belonging to a coalition (given by the sum of $w_{i,j}$ over all other agents $j$ belonging to the same coalition) is averaged over all the other members of that coalition, i.e., excluding herself.
Modified fractional hedonic games constitute a class of succinctly representable hedonic games. 

We are interested in the scenario in which agents, individually or jointly, choose to form a new coalition or to join an existing one, until a stable outcome is reached. To this aim, we consider common stability notions, leading to strong Nash stable outcomes, Nash stable outcomes or core stable outcomes: we study their existence, complexity and performance, both in the case of general weights and in the case of 0-1 weights. 
In particular, we completely characterize the existence of the considered stable outcomes and show many tight or asymptotically tight results on the performance of these natural stable outcomes for modified fractional hedonic games, also highlighting the differences with respect to the model of fractional hedonic games, in which the total benefit of an agent in a coalition is averaged over all members of that coalition, i.e., including herself.  
\end{abstract}

%%%%%%%%%%%%%%%%%%%%%%%%%%%%%%%%%%%%%%%%%%%%%%%%%%%%%%%%%%%%%%%%%%%%%%%%%%%%%%%%%%%%%%%%%%%%%%%%%%%%%%%%%
%% start of main body of paper

%\input{samplebody-conf}

\section{Introduction}
Teamwork, clustering and coalition formations have been important and
widely investigated issues in computer science research. In fact, in
many economic, social and political situations, individuals carry
out activities in groups rather than by themselves. In these scenarios,
it is of crucial importance to consider the satisfaction of the members of the groups. 

Hedonic games, introduced in \cite{DG1980}, model the formation of coalitions of agents.  
They are games in which agents have preferences over the set of all
possible agent coalitions, and the utility of an agent depends on the composition of the coalition she belongs to. While the standard model of hedonic games assumes that agents' preferences over coalitions are ordinal, there are several prominent classes of hedonic games where agents assign cardinal utilities to coalitions. 
Additively separable hedonic games constitute a natural and succinctly representable class of hedonic games.   
In such setting each agent has a value for any other agent, and the utility of a coalition to a particular agent
is simply the sum of the values she assigns to the members
of her coalition. Additive separability satisfies a number of
desirable axiomatic properties \cite{ABS11} and is the non-transferable utility generalization of
graph games studied in \cite{DP94}. Fractional hedonic games, introduced in \cite{ABH2014}, are similar to additively separable ones, with the difference that the utility of each agent is divided by the size of her coalition. Arguably, it is more natural to compute
the average value of all other members of the coalition \cite{EFF2016}. Various solution concepts, such as the core, the
strict core, and various kinds of individual stability like Nash Equilibrium have been proposed to analyze these games (see the Related Work subsection).

In this paper we deal with \emph{modified fractional hedonic games} (MFHGs), introduced in \cite{O12}, and afterward studied in \cite{EFF2016,KKP16}. 
MFHGs model natural behavioral dynamics in social environments. Even when defined on undirected and unweighted graphs, they suitably model a basic economic scenario 
%in which each agent can be considered as a buyer or a seller; this scenario is 
referred to in \cite{ABH2014,BFFMM15} as Bakers and Millers. 
Moreover, MFHGs can model other realistic scenarios: (i) politicians may want to be in a party that maximizes the fraction of like-minded members; (ii) people may want to be with an as large as possible fraction of people of the same ethnic or social group.

In MFHGs, slightly differently than in fractional hedonic games, the utility of an agent $i$ is divided by the size of the coalition she belongs to minus 1, that indeed corresponds to the average value of all other members than $i$ of the coalition. Despite such small difference, we will show that natural stable outcomes in MFHGs perform differently than in fractional hedonic games. Specifically, we adopt \emph{Nash stable}, \emph{Strong Nash stable} and \emph{core} outcomes. Informally, an outcome is Nash stable (or it is a Nash equilibrium) if no agent can improve her utility by unilaterally changing her own coalition. Moreover, an outcome is strong Nash stable if no subset of agents can cooperatively deviate in a way that benefits all of them. Finally, an outcome is in the core or is core stable, if there is no subset of agents $T$, whose members all prefer $T$ with respect to the coalition in the outcome. 
We point out that, (strong) Nash stable outcomes are resilient to a group of agents that can join any coalition and therefore represent a powerful solution concept. However, there are settings in which it is not allowed for one or more agents to join an existent coalition without asking for permission to its  members: in these settings the notion of core, where in a non-stable outcome a subset of $T$ agents can only form a new coalition itself and cannot join an already non-empty coalition, appears to be more realistic.

Our aim is to study the existence, performance and computability of natural stable outcomes for MFHGs. In particular, we evaluate the performance of Nash, strong Nash, and core stable outcomes for MFHGs, by means of the widely used notions of price of anarchy (resp. strong price of anarchy and core price of anarchy), and price of stability (resp. strong price  of stability and core price of stability), which are defined as the ratio between the social optimal value and the social value of the worst (resp. best) stable outcome. 

An instance of MFHG can be effectively modeled by means of a weighted undirected graph $G=(N,E,w)$, where nodes in $N$ represent the agents, and the weight $w(\{i,j\})$ of an edge $\{i,j\} \in E$ represents how much agents $i$ and $j$ benefit from belonging to the same coalition.

\subsection{Related Work}
To the best of our knowledge, only few papers dealt with stable outcomes for MFHGs. Olsen \cite{O12} considers unweighted undirected graphs and investigates computational issues concerning the problem of computing a Nash stable outcome different than the trivial one where all the agents are in the same coalition. The author proves that the problem is NP-hard when we require that a coalition must contain a given subset of the agents, and that it is polynomial solvable for any connected graph containing
at least four nodes. Kaklamanis et al. \cite{KKP16} show that the price of stability is $1$ for unweighted graphs. Finally, Elkind et al. \cite{EFF2016} study the set of Pareto optimal outcomes for MFHGs.

Fractional hedonic games have been introduced by Aziz et al. \cite{ABH2014}. They prove that the core can be empty for games played on general graphs and that it is not empty for games played on some classes of undirected and unweighted graphs (that is, graphs with degree at most $2$, multipartite complete graphs, bipartite graphs admitting a perfect matching and regular bipartite graphs). Brandl et al. \cite{BBS15}, study the existence of core and individual stability in fractional hedonic games and the computational complexity of deciding whether a core and individual stable partition exists in a given fractional hedonic game. Bil{\`{o}} et al. \cite{BFFMM14} initiated the study of Nash stable outcomes for fractional hedonic games and study their existence, complexity and performance for general and
specific graph topologies. In particular they show that the price of anarchy is $\Theta(n)$, and that for unweighted graphs, the problem of computing a Nash stable outcome of maximum social welfare is NP-hard, as well as the problem of computing an optimal (not necessarily stable) outcome. Furthermore, the same authors in \cite{BFFMM15} consider unweighted undirected graphs and show that $2$-Strong Nash outcomes, that is, an outcome such that no pair of agents can improve their utility by simultaneously changing their own coalition, are not always guaranteed. They also provide upper and lower bounds on the price of stability for games played on different unweighted graphs topologies. %It is remarkable the almost tight bound (up to a 0.026 additive factor) for bipartite graphs, and the polynomial time algorithm that computes an optimum that is also Nash stable for unweighted tree graphs. Kaklamanis et al. \cite{KKP16} provide improved results on the price of stability for fractional hedonis games. 
Finally, Aziz et al. \cite{AGGMT15} consider the computational complexity of computing welfare maximizing partitions (not necessarily Nash stable) for fractional hedonic games. 
We point out that fractional hedonic games played on unweighted undirected graphs model realistic economic scenarios referred to in \cite{ABH2014,BFFMM15} as Bakers and Millers.

Hedonic games have been introduced by Dr{\'e}ze and Greenberg \cite{DG1980}, who analyzed them under a cooperative perspective. Properties guaranteeing the existence of core allocations for games with additively separable utility have been studied by Banerjee, Konishi and S{\"o}nmez \cite{BKS01}, while Bogomolnaia and Jackson \cite{BJ02} deal with several forms of stable outcomes like the core, Nash and individual stability. Ballester \cite{B04} considers computational complexity issues related to hedonic games, and show that the core and the Nash stable outcomes have corresponding NP-complete decision problems for a variety of situations, while Aziz et al. \cite{ABS11} study the computational complexity of stable coalitions in additively separable hedonic games. Moreover, Olsen \cite{O09} proves that the problem of deciding whether a Nash stable coalitions exists in an additively separable hedonic game is NP-complete, as well as the one of deciding whether a non-trivial Nash stable coalitions exists in an additively separable hedonic game with non-negative and symmetric preferences (i.e., unweighted undirected graphs). 

Feldman et al. \cite{FLN15} investigate some interesting subclasses of hedonic games from a non-cooperative point of view, by characterizing Nash equilibria and providing upper and lower bounds on both the price of stability and the price of anarchy. It is worth noticing that in their model they do not have an underlying graph, but agents lie in a metric space with a distance function modeling their distance or ``similarity''. Peters \cite{P16} considers ``graphical'' hedonic games where agents form the vertices of an undirected graph, and each agent's utility function only depends on the actions taken by her neighbors (with general value functions). It is proved that, when agent graphs have bounded treewidth and bounded degrees, the problem of finding stable solutions, i.e., Nash equilibria, can be efficiently solved. Finally, hedonic games have also been considered by Charikar et al. \cite{CGW05} and by Demaine et al. \cite{DEFI06} from a classical optimization point of view (i.e, without requiring stability for the solutions) and by Flammini et al. in an online setting \cite{FMMSZ18}.

Peters et al. \cite{PE15} consider several classes of hedonic games and identify simple conditions on expressivity that are sufficient for the problem of checking whether a given game admits a stable outcome to be computationally hard. 

From a different perspective, strategyproof mechanisms for additively separable hedonic and fractional hedonic games have been proposed in \cite{FMZ17,WV2015}.

Finally, hedonic games are being widely investigated also under different utility definitions: For instance, in \cite{BFMO17,BFO17}, coalition formation games, in which agent utilities are proportional to their harmonic centralities in the respective coalitions, are considered.

\subsection{Our Results}
We start by dealing with strong Nash stable outcomes. We first prove that there exists a simple star graph with positive edge weights that admits no strong Nash stable outcomes. Therefore we focus on unweighted graphs, and present a polynomial time algorithm that computes an optimum outcome that can be transformed in a strong Nash stable one with the same social welfare, implying that strong Nash stable outcomes always exist and that the strong price of stability is $1$. We further prove that the strong price of anarchy is exactly $2$. In particular, we are able to show that, even for jointly cooperative deviations of at most $2$ agents, the strong price of anarchy is at most $2$ (we emphasize that, as we will describe in the next paragraph, the price of anarchy for Nash stable outcomes that are resistant to deviations of one agent grows linearly with the number of agents), while it is at least $2$ for jointly cooperative deviations of any subsets of agents.

We subsequently turn our attention on Nash stable outcomes. We notice that Nash stable outcomes are guaranteed to exist only if edge weights are non-negative.
%, the grand coalition outcome in which all agents belong to the same coalition is Nash stable. 
In a similar way as in \cite{BFFMM14}, we prove that the price of anarchy is at least $\Omega(n)$, where $n$ is the number of agents, even for unweighted paths, and that it is at most $n-1$ for the more general case of non-negative edge-weighted graphs, thus giving an asymptotically tight characterization.
We also prove a matching lower bound of $\Omega(n)$ to the price of stability.
%, we notice that it is at least $\Omega(n)$ for positive edge weighted star graphs.  Given the upper bound to the price of anarchy, this bound is asymptotically tight. 
%Concerning unweighted graphs, the result holding for strong Nash oucomes at a strong Nash stable outcome is clearly also a Nash stable outcome (but the vice versa does not hold in general), and thus, from the previous result on the strong price of stability, we get that it is possible to compute in polynomial time a Nash equilibrium that is also optimum. This result complements the one presented in \cite{KKP16}, where authors show that the price of stability is $1$ (without considering complexity issues). In particular, with respect to the result in \cite{KKP16}, we provide a different characterization of the optimum that permits us to first compute an outcome that maximizes the social welfare and then we transform such optimal outcome into a strong Nash without worsening its social welfare.

We finally consider core stable outcomes and show that they always exist, and in particular that an outcome that is core stable can be computed in polynomial time, even in the presence of negative weights, i.e., for general undirected weighted graphs. We then establish that the core price of stability is $2$. We further show that the core price of anarchy is at most $4$. We also provide a tight analysis for unweighted graphs.

In the next subsection we emphasize the differences between MFHGs and fractional hedonic games.

\subsection{Main Differences between MFHGs and Fractional Hedonic Games}
Roughly speaking, we say that an outcome is a $k$-\emph{strong Nash equilibrium} if no subset of at most $k$ agents can jointly change their strategies in a way that all of the $k$ agents strictly improve their utility. It is easy to see that, for any $k,k' \geq 2$, such that $k' \geq k$, a $k'$-strong Nash equilibrium is also a $k$-strong Nash equilibrium. It is known that $2$-strong Nash stable outcomes are not guaranteed to exist for fractional hedonic games, even for unweighted graphs \cite{BFFMM15}. In this paper we show that for MFHGs played on unweighted graphs, $k$-strong Nash equilibrium always exists and can be computed in polynomial time, for any $1 \leq k \leq n$, where $n$ is the number of agents, and provide a tight analysis on the strong price of anarchy and stability. 

For both MFHGs and Fractional Hedonic Games, Nash stable outcomes (or equivalently $1$-strong Nash stable) are guaranteed to exist  \cite{BFFMM14} for positive weights, but not for negative ones; moreover, the price of stability grows linearly with the number of agents. For fractional hedonic games played on unweighted graphs, it is known \cite{BFFMM15} that the price of stability is greater than $1$ even for simple graphs and that computing an optimum is NP-hard. For MFHGs we show that it is possible to compute in polynomial time a (strong) Nash equilibrium that is also optimum.

Finally, it is known that the core can be empty even for fractional hedonic games played on unweighted graphs and that it is NP-hard deciding the existence \cite{BBS15}. 
%Moreover, the core it is not empty for fractional hedonic games played on some classes of unweighted graphs, that is, graphs with degree at most $2$, multipartite complete graphs, bipartite graphs admitting a perfect matching and regular bipartite graphs \cite{ABH2014}. 
In this paper we show that for MFHGs the core is not empty for any graphs (this result was also observed in \cite{ABBHOP17} for unweighted graphs), and that a core stable outcome can be computed in polynomial time. We further provide a tight and an almost tight analysis for the core price of stability and anarchy, respectively.          

\iffalse
\subsection{Paper Organization}
The remainder of the paper is organized as follows. In Section \ref{sec:preliminaries} we formally define modified fractional hedonic games. The technical contributions of the paper are then presented in Sections \ref{sec:strong_Nash_equilibria}, \ref{sec:Nash_equilibria} and \ref{sec:core_equilibria}, which address respectively, Strong Nash, Nash, and core outcomes. Finally, in Section \ref{sec:conclusion}, we resume our results and list some interesting open problems.  
\fi

\section{Preliminaries}\label{sec:preliminaries}
For an integer $k>0$, denote with $[k]$ the set $\{1,\ldots,k\}$.

We model a coalition formation game by means of a undirected graph. For an undirected edge-weighted graph $G=(N,E,w)$, denote with $n=|N|$ the number of its nodes. For the sake of convenience, we adopt the notation $(i,j)$ and $w_{i,j}$ to denote the edge $\{i,j\}\in E$ and its weight $w(\{i,j\})$, respectively. Say that $G$ is unweighted if $w_{i,j}=1$ for each $(i,j)\in E$. We denote by $\delta^{i}(G) = \sum_{j \in N: (i,j) \in E} w_{i,j}$, the sum of the weights of all the edges incident to $i$. Moreover, let $\delta^{i}_{max}(G)=\max_{j \in N: (i,j) \in E} w_{i,j}$ be the maximum edge-weight incident to $i$. We will omit to specify $(G)$ when clear from the context. Given a set of edges $X\subseteq E$, denote with $W(X)=\sum_{(i,j) \in X}w_{i,j}$ the total weight of edges in $X$. Given a subset of nodes $S \subseteq N$, $G_{S}=(S,E_{S})$ is
the subgraph of $G$ induced by the set $S$, i.e., $E_S = \{(i,j)\in
E : i,j \in S\}$.

Given an undirected edge-weighted graph $G=(N,E,w)$, the {\em modified fractional hedonic game} induced by $G$,
denoted as $\ga(G)$, is the game in which each node $i \in N$ is associated with an agent. We assume that agents are numbered from $1$ to $n$ and, for every $i \in [n]$, each agent chooses to join a certain {\em coalition} among $n$ candidate ones: the strategy of agent $i$ is an integer $j \in [n]$, meaning that agent $i$ is selecting candidate coalition $C_j$. A coalition structure (also called outcome or partition) is a partition of the set of agents into $n$ coalitions $ \outcome=\{C_1,C_2,\ldots,C_n\}$ such that $C_j \subseteq N$ for each $j\in [n]$, $\bigcup_{j\in [n]}C_j = N$ and $C_i \cap C_j = \emptyset$ for any $i,j\in [n]$ with $i\neq j$. Notice that, since the number of candidate coalitions is equal to the number of agents (nodes), some coalition may be empty. If $i \in C_j$, we say that agent $i$ is a member of the coalition $C_j$. We denote by $\outcome(i)$  the coalition in $\outcome$ of which agent $i$ is a member. In an outcome $\outcome$, the utility of agent $i$ is defined as $u_i(\outcome)=\sum_{j\in \outcome(i)} \frac{w_{i,j}}{|\outcome(i)|-1}$. We notice that, for any possible outcome $\outcome$, we have that $u_i(\outcome) \leq \delta^{i}_{max}$. 

Each agent chooses the coalition she belongs to with the aim of maximizing her utility. We denote by $(\outcome, i, j)$, the new coalition structure obtained from $\outcome$ by moving agent $i$ from $\outcome(i)$ to $C_j$; formally, $(\outcome, i, j) = \outcome \setminus \{\outcome(i),C_j\} \cup \{\outcome(i)\setminus \{i\}, C_j \cup \{i\}\}$. 
An agent \emph{deviates} if she changes the coalition she belongs to. Given an outcome $\outcome$, an \emph{improving move} (or simply a \emph{move}) for agent $i$ is a deviation to any coalition $C_j$ that strictly increases her utility, i.e., $u_i((\outcome, i, j)) > u_i(\outcome)$. 
Moreover, agent $i$ performs a \emph{best-response} in coalition $\outcome$ by choosing a coalition providing her the highest possible utility (notice that a best-response is also a move when there exists a coalition  $C_j$ such that  $u_i((\outcome, i, j)) > u_i(\outcome)$).
An agent is \emph{stable} if she cannot perform a move. An outcome is \emph{(pure) Nash stable} (or a \emph{Nash equilibrium}) if every agent is stable. An \emph{improving dynamics}, or simply a dynamics, is a sequence of moves, while a \emph{best-response dynamics} is a sequence of best-responses.
A game has the {\em finite improvement path property} if it does not admit an improvement dynamics of infinite length. Clearly, a game possessing the finite improvement path property always admits a Nash stable
outcome. We denote with $\pne(\ga(G))$ the set of Nash stable
outcomes of $\ga(G)$. 

An outcome $\outcome$ is a $k$-\emph{strong Nash equilibrium} if, for each $\outcome'$ obtained from $\outcome$, when a subset of at most $k$ agents $K \subseteq N$ (with $|K| \leq k$) jointly change (or deviate from) their strategies (not necessarily selecting the same candidate coalition), $u_i(\outcome) \geq u_i(\outcome')$ for some $i$ belonging to $K$, that is, after the joint collective deviation, there always exists an agent in the set of deviating ones who does not improve her utility. 
We denote with $\sne[k](\ga(G))$ the set of strong Nash stable
outcomes of $\ga(G)$. We simply say that an outcome $\outcome$ is a strong Nash equilibrium if $\outcome$ is an $n$-strong Nash equilibrium. It is easy to see that, for any graph $G$ and any $k\geq2$, $\sne[k](\ga(G)) \subseteq \sne[k-1](\ga(G))$, while the vice versa does not in general hold. Clearly, $\sne[1](\ga(G))=\pne(\ga(G))$.
Analogously to the notion of Nash equilibrium, also for strong Nash equilibria it is possible to define a dynamics as a sequence of improving moves, where each move performed by agents in $K$ leading from outcome $\outcome$ to outcome $\outcome'$ is such that all of them improve their utility, i.e. $u_i(\outcome') > u_i(\outcome)$ for every $i \in K$.

We say that a coalition $T \subseteq N$ \emph{strongly blocks} an outcome $\outcome$, if each agent $i \in T$ strictly prefers $T$, i.e., strictly improve her utility with respect to her current coalition $\outcome(i)$.
An outcome that does not admit a strongly blocking coalition is called \emph{core stable} and is said to be in the \emph{core}. We denote with $\core(\ga(G))$ the core of $\ga(G)$. 

%Similarly, we say that a coalition $T \subseteq N$ \emph{weakly blocks} an outcome $\outcome$, if each agent $i \in T$ weakly prefers $T$, i.e., she does not worsen her utility, to her current coalition $C(i)$, and there exist at least one player $i \in T$ who strictly prefers $T$ to her current coalition $C(i)$. An outcome that does not admit a weakly blocking coalition is said to be in the \emph{strict core}. We denote with $\score(\ga(G))$ the strict core of $\ga(G)$.     

The \emph{social welfare} of a coalition structure $\outcome$ is the summation of the agents' utilities, i.e.,
$\SW(\outcome) = \sum_{i \in N} u_i(\outcome)$. We overload the social
welfare function by applying it also to single coalitions to obtain
their contribution to the social welfare, i.e., for any $i \in [n]$, $\SW(C_i)=\sum_{j
\in C_i} u_j(\outcome)$ so that $\SW(\outcome) = \sum_{i \in [n]} \SW(C_i)$. It is worth noticing that, equivalently, for any $i \in [n]$, $\SW(C_i)=\frac{2W\left(E_{C_i}\right)}{|C_i|-1}$ and $\SW(\outcome) =\sum_{i \in [n]}\frac{2 W\left(E_{C_i}\right) }{|C_i|-1}$. 

Given a game $\ga(G)$, an \emph{optimum} coalition structure $\outcome^*(\ga(G))$ is one that maximizes the social welfare of $\ga(G)$. 
%A clustering $\outcome$ is \emph{feasible} if $G_{\C_i}$ is connected, for every $i\in [n]$. Notice that an optimal configuration is always feasible.
The {\em price of anarchy} (resp. \emph{strong price of anarchy} and \emph{core price of anarchy}) of a modified fractional hedonic game $\ga(G)$ is defined as the worst-case ratio between the social welfare of a social optimum outcome and that of a Nash equilibrium (resp. strong Nash equilibrium and core). Formally, for any $k=1,\dots,n$, $\poa({\ga(G)}) = \max_{\outcome\in{\pne(\ga(G))}} \frac{\SW(\outcome^*(\ga(G)))}{\SW(\outcome)}$ (resp. $\spoa[k]({\ga(G)}) = \max_{\outcome\in{\sne[k](\ga(G))}} \frac{\SW(\outcome^*(\ga(G)))}{\SW(\outcome)}$ and $\cpoa({\ga(G)}) = \max_{\outcome\in{\core(\ga(G))}} \frac{\SW(\outcome^*(\ga(G)))}{\SW(\outcome)}$). Analogously, the {\em price of stability} (resp. \emph{strong price of stability} and \emph{core price of stability}) of $\ga(G)$ is defined as the best-case ratio between the social welfare of a social optimum outcome and that of a Nash equilibrium (resp. strong Nash equilibrium and core). Formally, for any $k=1,\dots,n$, $\pos({\ga(G)}) = \min_{\outcome\in{\pne(\ga(G))}} \frac{\SW(\outcome^*(\ga(G)))}{\SW(\outcome)}$ (resp. $\spos[k]({\ga(G)}) = \min_{\outcome\in{\sne[k](\ga(G))}} \frac{\SW(\outcome^*(\ga(G)))}{\SW(\outcome)}$ and $\cpos({\ga(G)}) = \min_{\outcome\in{\core(\ga(G))}} \frac{\SW(\outcome^*(\ga(G)))}{\SW(\outcome)}$). Clearly, for any game $\ga(G)$ it holds that $1 \leq \pos({\ga(G)}) \leq \poa({\ga(G)})$ (resp. $1 \leq \spos[k]({\ga(G)}) \leq \spoa[k]({\ga(G)})$ and $1 \leq \cpos({\ga(G)}) \leq \cpoa({\ga(G)})$).  

%%%%%%%%%%%%%%%%%%%%%%%%%%%%%%%%%%%%%%%%%%%%%%%%%%

%Formally, a hedonic game is a pair $(N,\succcurlyeq)$ of a finite set $N=\{1,2,\dots ,n\}$ of agents, and, for each agent $i \in N$, a preference relation $\succcurlyeq$. For $i \in N$, let  $C_i=\{C\subseteq N \mid i\in C \}$ be the coalition that agent $i$  belongs to.
%A coalition is a partition $\outcome=\{C_1, C_2, \dots, C_k \}$  of the set of agents. 
%Thus, every agent $i \in N$ belongs to a unique coalition $C_i$ in $\outcome$.
%
%We consider the agents in $N$ represented  by  vertices in a graph
%$G=(V, E)$, where $V=N$, and, since in the context of stability it is reasonable to consider symmetric friendship
%relations, we will focus on undirected graphs representing networks of friends.
%
%Moreover  $\forall(i, j) \in V\times V$, we define $w_{ij}=w_{ji}=k$ as the utility of agent $i$ when $i$ and $j$ are in the same cluster.

%Given a partition $\outcome$ the utility of agent $i$ is:
%$$ u_i(\outcome)=\sum_{j\in C_i} \frac{w_{ij}}{|C_i|-1}.$$
%
%
%The strategy of each agent is to maximize her utility $u_i$ choosing the cluster according to the previous definition. 

\section{Strong Nash stable outcomes}\label{sec:strong_Nash_equilibria}
%A strategy profile is a strong Nash equilibrium (SNE) if no coalition, including the grand coalition, can profitably deviate from the prescribed profile. This definition
%immediately implies that any strong equilibrium is a Nash equilibrium: it is a Nash equilibrium in which no coalition, taking the actions of its
%complements as given, can cooperatively deviate in a way that benefits all of its members. 
%
%Analysing the weighted, case we notice that it does not admit the existence of a SNE.

In this section we consider strong Nash stable outcomes. We start by showing that even the existence of $2$-strong nash equilibria is not guaranteed for non-negative edge-weights graphs.

\begin{theorem}\label{thm:no_existence_strong_NAsh}
There exists a star graph $G$ containing only non-negative edge-weights such that $\{\ga(G)\}$ admits no $2$-strong Nash stable outcome. 
\end{theorem}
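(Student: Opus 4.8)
The plan is to exhibit a concrete star and argue by a chain of inclusions. Take $G$ to be the star with center $c$ and two non-adjacent leaves $a,b$, with non-negative weights $w_{c,a}=2$ and $w_{c,b}=1$. Since $\sne[2](\ga(G)) \subseteq \pne(\ga(G))$, it suffices to identify all Nash stable outcomes and then check that each of them admits a profitable joint deviation by a pair of agents; if the unique Nash stable outcome fails to be $2$-strong, then $\sne[2](\ga(G)) = \emptyset$, which is exactly the claim.

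First I would pin down $\pne(\ga(G))$. The crucial structural fact is that in a star a leaf obtains positive utility only when it shares its coalition with $c$: a leaf's sole incident edge goes to the center, so a leaf separated from $c$ has utility $0$. Hence, in any outcome in which some leaf $i$ is not in $\outcome(c)$, that leaf can deviate into $\outcome(c)$ and obtain strictly positive utility $\frac{w_{c,i}}{|\outcome(c)|}>0$, contradicting stability. This forces every Nash stable outcome to place both leaves together with $c$, so the grand coalition $\{c,a,b\}$ is the \emph{unique} Nash stable outcome of $\ga(G)$. I would dispatch the remaining partitions (center isolated, the two leaves paired without $c$, etc.) by the same ``a leaf strictly wants to join $c$'' argument.

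It then remains to show that the grand coalition is not $2$-strong. In $\{c,a,b\}$ one computes $u_c=\frac{w_{c,a}+w_{c,b}}{2}=\frac{3}{2}$, $u_a=\frac{w_{c,a}}{2}=1$ and $u_b=\frac{1}{2}$. Consider the deviating pair $K=\{c,a\}$, both moving to a currently empty candidate coalition so as to form $\{c,a\}$ (leaving $b$ alone, whose utility is irrelevant since $b\notin K$). In the resulting outcome $u_c=u_a=w_{c,a}=2$, so both members of $K$ strictly improve ($c$ from $\frac{3}{2}$ to $2$ and $a$ from $1$ to $2$). Thus the grand coalition is not a $2$-strong Nash stable outcome, and by the inclusion above $\ga(G)$ admits none.

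The point that requires care, and the reason for choosing distinct leaf weights, is the strict improvement of the center in the pair deviation. Pairing $c$ with a single leaf $i$ yields $u_c=w_{c,i}$, whereas in the grand coalition $u_c$ equals the average leaf weight; the center strictly benefits precisely when the chosen leaf's weight exceeds this average, which fails if all leaves carry equal weight. Picking $w_{c,a}>w_{c,b}>0$ and letting $a$ be the maximum-weight leaf guarantees that $w_{c,a}$ strictly exceeds $\frac{w_{c,a}+w_{c,b}}{2}$, so the joint move of $\{c,a\}$ is strictly profitable for both; this is the main subtlety of the argument, the rest being routine case analysis.
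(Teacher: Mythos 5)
Your proof is correct and follows essentially the same approach as the paper's: both identify the grand coalition as the unique Nash stable outcome (every leaf gets utility $0$ away from the center and so must join it) and then break it with a joint deviation of the center together with its heaviest leaf into a fresh coalition. The paper merely uses a larger star with one weight-$1$ leaf and many weight-$\epsilon$ leaves, whereas you use two leaves of weights $2$ and $1$; the key point you isolate---that the center strictly gains only when the chosen leaf's weight exceeds the average leaf weight---is exactly the mechanism in the paper's construction as well.
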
  

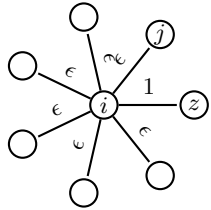
\begin{figure}[h]
\centering
 \scalebox{1}{\begin{tikzpicture}[font=\footnotesize,-,>=stealth',shorten >=1pt,auto,node distance=3cm, thick,main node/.style={circle,draw}]
\tikzset{edge/.style = {-,> = latex'}}

\def \radius {1.2cm}

\node (0) at (0, 0) {$i$};
  \node[main node] (0) at (0,0) {};

\node (1) at ({360/7 *(1 - 1)}:\radius) {$z$};
 \node[main node](1) at ({360/7 * (1 - 1)}:\radius) {};
\draw[edge] (0)  to node[sloped, anchor=center, above]  {$1$} (1);

\node (2) at ({360/7 * (2 - 1)}:\radius) {$j$};
 \node[main node](2) at ({360/7 * (2 - 1)}:\radius) {};
\draw[edge] (0)  to node[sloped, anchor=center, above]  {$\epsilon$} (2);

\node (3) at ({360/7 * (3 - 1)}:\radius) {};
 \node[main node](3) at ({360/7 * (3 - 1)}:\radius) {};
\draw[edge] (0)  to node[sloped, anchor=center, above]  {$\epsilon$} (3);

\node (4) at ({360/7 * (4 - 1)}:\radius) {};
 \node[main node](4) at ({360/7 * (4 - 1)}:\radius) {};
\draw[edge] (0)  to node[sloped, anchor=center, above]  {$\epsilon$} (4);

\node (5) at ({360/7 * (5 - 1)}:\radius) {};
 \node[main node](5) at ({360/7 * (5 - 1)}:\radius) {};
\draw[edge] (0)  to node[sloped, anchor=center, above]  {$\epsilon$} (5);

\node (6) at ({360/7 * (6 - 1)}:\radius) {};
 \node[main node](6) at ({360/7 * (6 - 1)}:\radius) {};
\draw[edge] (0)  to node[sloped, anchor=center, above]  {$\epsilon$} (6);

\node (7) at ({360/7 * (7 - 1)}:\radius) {};
 \node[main node](7) at ({360/7 * (7 - 1)}:\radius) {};
\draw[edge] (0)  to node[sloped, anchor=center, above]  {$\epsilon$} (7);
\end{tikzpicture}
}
  \caption{The star graph $G$.}\label{fig:star}
\end{figure}
\begin{proof}
Let $G$ be a star of order $n$ centred in $i$ as depicted in Figure~\ref{fig:star}. The weights of the edges are such that there exists a node leaf $z$ such that $w_{i,z}=1$, while for all the other leafs we have that, $w_{i,j}=\epsilon$, for any $j \neq z,i$, and for small enough positive $\epsilon>0$, (for instance $\epsilon < \frac{1}{n}$). First notice that, the grand coalition where all the agents belong to the same coalition, is not a $2$-strong Nash stable outcome since, for instance, the two agents $i, z$ would both get strictly higher utility if they belong to a different coalition containing only them. On the other hand, any outcome where any leaf $j$ does not belong to the same coalition containing the center $i$ is even not Nash stable (i.e., $1$-strong Nash stable), since $j$ would get utility zero but she can improve her utility by selecting the coalition containing the agent $i$. Hence, the claim follows.
\end{proof}

Given the above negative result, in the remainder of this section, we focus on unweighted graphs. 

Let $K_1$, $K_2$ and $K_3$ be the unweighted cliques with $1$, $2$ and $3$ nodes, respectively, i.e., $K_1$ is an isolated node,  $K_2$ has $2$ nodes and a unique edge and $K_3$ is a triangle with $3$ edges. We say that a coalition being isomorphic to $K_1$, $K_2$ or $K_3$ is a \emph{basic} coalition.

\subsection{Strong Price of Stability}
In this subsection we show that, for unweighted graphs, it is possible to compute in polynomial time an optimum outcome and also a strong Nash outcome with the same social value. As consequence we get that the strong price of stability is $1$.

In order to show how to compute in polynomial time an optimal solution, we first need some additional lemmata.

\begin{lemma}\label{lem:induction}
Given a coalition $C$ with $|C|\geq 4$, there exists an edge $e=(i,j)$ belonging to $E_C$ such that 
$$\SW(\{i,j\})+ \SW(C \setminus \{i,j\}) \geq \SW(C).$$

\end{lemma}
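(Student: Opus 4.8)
The plan is to reduce the claimed social-welfare inequality to a purely combinatorial statement about degrees, and then settle that statement by an averaging argument over the edges of $G_C$. Throughout set $k=|C|\ge 4$ and $m=W(E_C)=|E_C|$ (the graph being unweighted), and for a vertex $v\in C$ write $d_v$ for its degree inside the induced subgraph $G_C$; we may assume $m\ge 1$, since otherwise $E_C=\emptyset$, $\SW(C)=0$, and there is nothing to peel off. Because $(i,j)$ is an edge we have $\SW(\{i,j\})=2$, and deleting the two endpoints of $(i,j)$ destroys exactly $d_i+d_j-1$ edges, so $C\setminus\{i,j\}$ induces $m-d_i-d_j+1$ edges on $k-2$ vertices, giving $\SW(C\setminus\{i,j\})=\frac{2(m-d_i-d_j+1)}{k-3}$ (well defined as $k-3\ge 1$). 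First I would substitute these into the target inequality and clear the positive denominator $(k-1)(k-3)$; after simplification, $\SW(\{i,j\})+\SW(C\setminus\{i,j\})\ge \SW(C)$ turns out to be \emph{equivalent} to
\[
d_i+d_j\le \frac{2m}{k-1}+k-2 .
\]
Hence it suffices to exhibit a single edge of $G_C$ whose endpoints have degree sum at most $\frac{2m}{k-1}+k-2$.

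Next I would reduce this existence statement to a bound on the \emph{average} degree sum over edges. Each vertex $v$ contributes $d_v$ to the term $d_i+d_j$ of each of the $d_v$ edges incident to it, so $\sum_{(i,j)\in E_C}(d_i+d_j)=\sum_{v\in C}d_v^2$, and therefore the average of $d_i+d_j$ over the $m$ edges is exactly $\frac{1}{m}\sum_{v\in C}d_v^2$. Since the minimum is at most the average, it is enough to prove
\[
\sum_{v\in C}d_v^2\le m\!\left(\frac{2m}{k-1}+k-2\right),
\]
after which the edge attaining the minimum degree sum satisfies the required bound and hence the lemma.

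This last inequality is the heart of the argument, and I expect it to be the main obstacle: it is precisely de Caen's upper bound on the sum of squares of the degrees of a graph with $k$ vertices and $m$ edges. I would dispatch it in one of two ways. The quickest is to cite it as a known graph-theoretic fact. For a self-contained proof I would instead pick a vertex $i$ of minimum degree and its \emph{lowest-degree} neighbor $j$, and bound $d_j$ by controlling $\sum_{w\in N(i)}d_w$ (using that the edges counted there are at most $m$ plus the $\binom{d_i}{2}$ possible edges inside $N(i)$), which caps the neighbor's degree and yields $d_i+d_j\le\frac{2m}{k-1}+k-2$ after a short case split according to whether the minimum degree $d_i$ is small or large. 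A useful consistency check for either route is that both the degree inequality and the original social-welfare inequality hold with equality exactly for stars and for cliques — precisely the coalitions on which splitting off an edge is welfare-neutral — which confirms the bound is tight and correctly oriented.
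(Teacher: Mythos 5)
Your proof is correct and reaches the paper's key inequality by a genuinely different argument. The opening algebra coincides with the paper's: both reduce the welfare inequality for an edge $(i,j)$ to the degree condition $d_i+d_j\le \frac{2m}{k-1}+k-2$ (the paper states this in contrapositive form as its displayed lower bound on $\Delta$). From there the paper argues by contradiction on the edge \emph{minimizing} $d_i+d_j$, using only the elementary facts $2m\ge (k-1)\delta_{min}+\delta_{max}$ and $\Delta\le\delta_{min}+\delta_{max}$ to force $\delta_{max}>k-1$; this is fully self-contained. You instead note that the minimum of $d_i+d_j$ over edges is at most its average $\frac{1}{m}\sum_{v}d_v^2$ and invoke de Caen's bound $\sum_v d_v^2\le m\bigl(\frac{2m}{k-1}+k-2\bigr)$. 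That route is valid and arguably cleaner — it shows the \emph{average} edge already works, and your equality check (stars and cliques) is correct — but it outsources the combinatorial content to a cited theorem that is at least as strong as the fact needed here. One caution: your self-contained fallback does not close as sketched. The estimate $\sum_{w\in N(i)}d_w\le m+\binom{d_i}{2}$ gives $d_j\le \frac{m}{d_i}+\frac{d_i-1}{2}$, which for $d_i=1$ only yields $d_i+d_j\le m+1$; this exceeds $\frac{2m}{k-1}+k-2$ whenever $m>k-1$ (e.g.\ a clique with one pendant vertex), so the promised ``short case split'' would need a substantially sharper bound for low-degree vertices. If you want to avoid the citation, the paper's min/max-degree contradiction is the simpler path.
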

\begin{proof}
Let $m=|E_C|$ and $k=|C|$ be the number of edges and nodes in coalition $C$, respectively. 
Moreover, let $e=(i, j)$ be the edge minimizing $\Delta=\delta^i+\delta^j$.
Let us assume by contradiction that 
$$\SW(\{i,j\})+ \SW(C \setminus \{i,j\})=2+\frac{2(m-\Delta+1)}{k-3}< \frac{2m}{k-1}=\SW(C).$$

By simple calculations, we obtain that
\begin{align}\label{eq:delta}
\Delta > \frac{k^2-3k+2+2m}{k-1}
\end{align}
We denote by $\delta_{max}$ and $\delta_{min}$ the maximum and the minimum degrees of nodes in $G_C$, respectively. We have 
\begin{align}
2m&=\sum_{i\in C} \delta_i\geq(k-1)\delta_{min}+\delta_{max}\label{eq:edges_1}\\
\Delta &\leq \delta_{max}+\delta_{min}\label{eq:edges_2}
\end{align}
Substituting~\eqref{eq:edges_1}, ~\eqref{eq:edges_2}, in~\eqref{eq:delta}, the following holds: 
\begin{align*}\label{eq:diseq}
\Delta > \frac{k^2-3k+2+2m}{k-1}&\geq \frac{k^2-3k+2+(k-1)\delta_{min}+\delta_{max}}{k-1}\\
\delta_{max}+\delta_{min}\geq\Delta &> \frac{k^2-3k+2+(k-1)\delta_{min}+\delta_{max}}{k-1}\\
\left(\delta_{max}+\delta_{min}\right)(k-1)&> k^2-3k+2+(k-1)\delta_{min}+\delta_{max}\\
k\delta_{max}-\delta_{max}&>k^2-3k+2+\delta_{max}\\
(k-2)\delta_{max}&>(k-1)(k-2)\\
\delta_{max}&>(k-1):
\end{align*}
a contradiction, because the maximum degree of a node is at most $k-1$. 
\end{proof}

We are now ready to prove the following theorem, showing that it is possible to consider, without decreasing the social welfare of the outcome, only coalition structures formed by \emph{basic} coalitions.

\begin{theorem}\label{thm:basic_coalitions}
For any coalition structure $\outcome$, there exists a coalition structure $\outcome'$ containing only \emph{basic} coalitions and such that $\SW(\outcome')\geq \SW(\outcome)$.
\end{theorem}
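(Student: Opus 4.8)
The plan is to prove the statement coalition by coalition and then sum: since the social welfare is additive over coalitions (recall $\SW(\outcome)=\sum_{i\in[n]}\SW(C_i)$), it suffices to show that every single coalition $C$ can be partitioned into \emph{basic} coalitions whose total social welfare is at least $\SW(C)$. Taking the union of these per-coalition partitions over all coalitions of $\outcome$ then yields the desired $\outcome'$, which is a legitimate outcome because it consists of at most $n$ nonempty pieces. I would establish this per-coalition claim by strong induction on $k=|C|$.

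For the base cases $k\le 3$ I would argue directly, being careful that a small coalition need not be complete. A coalition with $k=1$ is already a $K_1$. For $k=2$, either $C$ is a $K_2$ and we are done, or $C$ carries no edge, in which case splitting it into two copies of $K_1$ leaves $\SW=0$ unchanged. For $k=3$, if $C$ is a triangle it is a $K_3$; otherwise $C$ has $m\in\{0,1,2\}$ edges and hence $\SW(C)=\frac{2m}{2}=m\le 2$, so splitting off one edge as a $K_2$ and leaving the remaining node as a $K_1$ gives total social welfare $2\ge m$ (and $0=m$ when $m=0$), which never decreases the welfare.

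For the inductive step $k\ge 4$ I would invoke Lemma~\ref{lem:induction}, which supplies an edge $e=(i,j)\in E_C$ with $\SW(\{i,j\})+\SW(C\setminus\{i,j\})\ge \SW(C)$. The piece $\{i,j\}$ is itself a $K_2$, hence basic, while $C\setminus\{i,j\}$ has size $k-2<k$, so by the induction hypothesis it can be partitioned into basic coalitions of total welfare at least $\SW(C\setminus\{i,j\})$. Concatenating $\{i,j\}$ with this partition yields a decomposition of $C$ into basic coalitions of total welfare at least $\SW(\{i,j\})+\SW(C\setminus\{i,j\})\ge\SW(C)$, closing the induction and thus the proof.

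The heavy lifting is already carried out by Lemma~\ref{lem:induction}, so I do not expect the inductive step to be the obstacle. The only genuinely case-specific part is the base analysis for $k\le 3$: since a size-$2$ or size-$3$ coalition need not induce a clique, the reduction to \emph{basic} coalitions forces us to split these small non-complete coalitions and to verify in each case that the social welfare does not drop. As the short computations above show, these verifications are all immediate, so the whole argument reduces to organizing the induction around Lemma~\ref{lem:induction} and dispatching the finitely many small cases.
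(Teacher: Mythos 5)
Your proposal is correct and follows essentially the same route as the paper: an induction on $|C|$ per coalition, with the base cases $k\le 3$ checked directly and the inductive step for $k\ge 4$ driven by Lemma~\ref{lem:induction}, then summing over the coalitions of $\outcome$. The only (harmless) difference is that you treat the three-node base case uniformly via the edge count $m$ rather than enumerating the four configurations as the paper does.
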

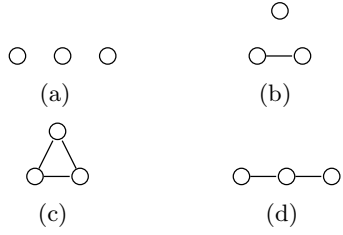
\begin{figure}[t]
\centering
\hspace{-10mm}\subfigure[]{\scalebox{.6}{  \begin{tikzpicture}[font=\footnotesize,-,>=stealth',shorten >=1pt,auto,node distance=3cm,
  thick,main node/.style={circle,draw}, node/.style={sloped,anchor=south,auto=false}]
  \tikzset{edge/.style = {- = latex'}}

\node (1) at (0, 0) {};
  \node[main node] (1) at (0,0) {};
\node (2) at (1, 0) {};
  \node[main node] (2) at (1,0) {};
\node (3) at (2, 0) {};
  \node[main node] (3) at (2, 0) {};

\end{tikzpicture}}}~
\hspace{15mm}\subfigure[]{\scalebox{.6}{\begin{tikzpicture}[font=\footnotesize,-,>=stealth',shorten >=1pt,auto,node distance=3cm,
  thick,main node/.style={circle,draw}, node/.style={sloped,anchor=south,auto=false}]
  \tikzset{edge/.style = {- = latex'}}

\node (1) at (0, 0) {};
  \node[main node] (1) at (0,0) {};
\node (2) at (1, 0) {};
  \node[main node] (2) at (1,0) {};
\node (3) at (0.5, 1) {};
  \node[main node] (3) at (0.5, 1) {};

\draw[edge] (1) to (2);

\end{tikzpicture}}}\\
\hspace{-3mm}\subfigure[]{\scalebox{.6}{  \begin{tikzpicture}[font=\footnotesize,-,>=stealth',shorten >=1pt,auto,node distance=3cm,
  thick,main node/.style={circle,draw}, node/.style={sloped,anchor=south,auto=false}]
  \tikzset{edge/.style = {- = latex'}}

\node (1) at (0, 0) {};
  \node[main node] (1) at (0,0) {};
\node (2) at (1, 0) {};
  \node[main node] (2) at (1,0) {};
\node (3) at (0.5, 1) {};
  \node[main node] (3) at (0.5, 1) {};

\draw[edge] (1) to (2);
\draw[edge] (2) to (3);
\draw[edge] (1) to (3); 
\end{tikzpicture}
}}~
\hspace{15mm}\subfigure[]{\scalebox{.6}{  \begin{tikzpicture}[font=\footnotesize,-,>=stealth',shorten >=1pt,auto,node distance=3cm,
  thick,main node/.style={circle,draw}, node/.style={sloped,anchor=south,auto=false}]
  \tikzset{edge/.style = {- = latex'}}

\node (1) at (0, 0) {};
  \node[main node] (1) at (0,0) {};
\node (2) at (1, 0) {};
  \node[main node] (2) at (1,0) {};
\node (3) at (2, 0) {};
  \node[main node] (3) at (2, 0) {};

\draw[edge] (1) to (2);
\draw[edge] (2) to (3);
\end{tikzpicture}
}}\\
\caption{Possible coalitions with three nodes.}
\label{fig:coalition}
\end{figure}
\begin{proof}
Consider any coalition $C$ belonging to $\outcome$. 
In the following we show that either coalition $C$ is \emph{basic}, or the nodes in $C$ can be partitioned in $h\geq 2$ basic coalitions $C'_1,\dots,C'_h$ such that $\sum_{i=1}^h \SW(C'_i) \geq \SW(C)$. This statement proves the claim because we can consider and sum up over all coalitions $C$ belonging to $\outcome$.

We prove the statement by induction on the number $k$ of nodes in $C$.

The base of the induction is for $k\leq 3$: 
For $k=1$ and $k=2$, $C$ is already a basic coalition.
For $k=3$, there are four possible configurations shown in Figure~\ref{fig:coalition}.
For configurations (a), (b) and (c), again $C$ already is a basic coalition (or can be trivially divided in basic coalitions).
For configuration (d), let $x_1, x_2, x_3$ the $3$ nodes in $C$; clearly, $\SW(C)=2$. Consider coalitions $C'_1=\{x_1,x_2\}$ and $C'_2=\{x_3\}$. It is easy to check that $\SW(C'_1)+\SW(C'_2)=2=\SW(\outcome)$.  

As to the induction step, given any $k\geq 4$, assume now that the statement holds for $1, \dots, k-1$; we want to show that it also holds for $k$. 

By Lemma \ref{lem:induction}, we know that there exists an edge $e=(i,j)$ belonging to $E_C$ such that $\SW(\{i,j\})+ \SW(C \setminus \{i,j\}) \geq \SW(C)$. 
Since $|C \setminus \{i,j\}|\leq k-2$, by the induction hypothesis,  coalition $C \setminus \{i,j\}$ can be decomposed in $h$ basic coalitions $C''_1,\dots,C''_h$ such that $\sum_{i=1}^h \SW(C''_i) \geq \SW(C \setminus \{i,j\})$. Therefore, given that also $\{i,j\}$ is a basic coalition, we have proven the induction step.
\end{proof}

By Theorem \ref{thm:basic_coalitions}, in order to compute an optimal solution for the coalition structure generation problem (i.e., an outcome maximizing the social welfare), it is possible to exploit a result from \cite{Hell84}:
%Let $\mathcal{G}$ be a family of connected graphs. Let $X$ be the optimal solution for the problem of covering, by disjoint copies of elements of $\mathcal{G}$, the largest number of vertices of a given graph G. It has been proved in~\cite{Hell84} that there exist a polynomial time algorithm to solve such covering problem. 

\begin{theorem}[\cite{Hell84}]\label{thm:hell}
Given an unweighted graph $G$, it is possible to compute in polynomial time a partition of the nodes of $G$ in sets inducing subgraphs isomorphic to $K_1$, $K_2$ or $K_3$ (i.e., a coalition structure composed by basic coalitions) maximazing the number of nodes belonging to sets inducing subgraphs isomorphic to $K_2$ or $K_3$.
\end{theorem}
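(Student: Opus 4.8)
The plan is to recast the statement as a covering problem and then reduce it to maximum matching. Since the total number of nodes $n$ is fixed, maximizing the number of nodes lying in $K_2$- or $K_3$-components is the same as minimizing the number of $K_1$-components, i.e.\ finding a collection of pairwise vertex-disjoint edges and triangles of $G$ that covers as many vertices as possible, the uncovered vertices being exactly the singletons. A first natural attempt — just compute a maximum matching and declare every matched pair a $K_2$ — does not suffice: on a triangle a matching covers only $2$ of the $3$ nodes, and on the ``bowtie'' (two triangles glued at a common vertex) a maximum matching leaves one node exposed while a triangle plus an edge covers all five. Hence triangles must genuinely be exploited, and the core difficulty is doing so while keeping the chosen cliques vertex-disjoint.

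First I would enumerate all triangles of $G$, which takes $O(n^3)$ time and is therefore polynomial; the edges are likewise listed in $O(n^2)$. I would then build an auxiliary graph $H$ on the vertices of $G$ together with, for each triangle $T=\{a,b,c\}$, a constant-size gadget (a few fresh ``port'' and ``internal'' vertices attached to $a$, $b$, $c$). The gadget is designed so that any matching of $H$ admits exactly two useful local configurations: an \emph{active} one, which matches all of $a,b,c$ into the gadget (modelling ``use triangle $T$'', thereby covering its three real vertices), and an \emph{inactive} one, which matches the gadget internally and leaves $a,b,c$ free to be matched elsewhere. Because each real vertex of $G$ can be matched at most once in $H$, vertex-disjointness of the selected edges and triangles is enforced automatically, and a real edge of $G$ is encoded directly as a matchable edge of $H$, so both ways of covering a vertex are represented. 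A maximum matching of $H$ is then computed in polynomial time by Edmonds' blossom algorithm.

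The main obstacle is proving that this reduction is \emph{exact}: one must show that, up to a fixed additive constant coming from gadget vertices that are unavoidably left exposed, the size of a maximum matching of $H$ equals the maximum number of nodes coverable by disjoint edges and triangles in $G$, and — crucially — that every maximum matching of $H$ projects back to a legal packing, so that no ``half-activated'' gadget matching only one or two ports of a triangle can beat the honest configurations. This amounts to a careful exchange argument: starting from any maximum matching of $H$, repeatedly normalise each gadget into its active or inactive form without decreasing the matching size, and conversely lift any optimal packing of $G$ to a matching of $H$ of the predicted size; checking the gadget on the extremal instances above (a single triangle and the bowtie) provides the tightness test. An alternative route, avoiding gadget bookkeeping, is to run an Edmonds-style augmenting-path algorithm directly on $G$ in which triangles are contracted like generalised blossoms; there the crux becomes the optimality certificate, i.e.\ a Tutte--Berge/LP-duality argument showing that once the algorithm halts no augmentation (by an alternating path, or by swapping in a triangle) remains. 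Either way, once optimality is certified the polynomial running time is immediate, yielding the claimed algorithm.
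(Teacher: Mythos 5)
First, a point of orientation: the paper does not prove this statement at all; it is imported verbatim as a black-box result of Hell and Kirkpatrick \cite{Hell84}, so there is no in-paper argument to compare yours against. Your reformulation is correct as far as it goes: maximizing the number of nodes in parts inducing $K_2$ or $K_3$ is equivalent to finding vertex-disjoint edges and triangles covering as many vertices as possible (the leftovers being the $K_1$'s), and your triangle and bowtie examples correctly show that a plain maximum matching does not solve this. The difficulty is that everything after that is a promissory note rather than a proof. You never exhibit the gadget: you say it has ``a few fresh port and internal vertices'' and ``is designed so that'' only active/inactive configurations survive, but the entire content of the reduction lives in that design, and it is genuinely delicate. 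A matching in the auxiliary graph $H$ maximizes the total number of matched vertices, real and gadget alike, so you must arrange that (i) the number of matched gadget vertices is the same (up to a global constant) in the active and inactive states, (ii) a ``half-activated'' state matching only one or two ports is never strictly better, and (iii) a real vertex matched into a gadget whose triangle is not fully activated does not get counted as covered when you project back. There are parity obstructions here (a triangle covers an odd number of real vertices, while matchings count vertices in pairs), and you acknowledge all of these issues yourself (``the main obstacle is proving that this reduction is exact\dots{} a careful exchange argument'') without resolving any of them. As written, one cannot verify that any gadget with the advertised properties exists, so the reduction is not established.

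Your closing ``alternative route'' is in fact much closer to how the result is actually proved: the Hell--Kirkpatrick/Cornu\'ejols--Hartvigsen--Pulleyblank line of work treats packing by $K_2$ together with hypomatchable graphs ($K_3$ being the relevant one here) via an Edmonds-style augmenting structure with a Berge--Tutte-type min--max certificate. But you also leave that route entirely unexecuted -- you name the crux (the optimality certificate once no augmentation remains) without supplying it. So the proposal correctly identifies the problem to be solved and the two plausible attack routes, but contains no complete argument for either; to make this self-contained you would need to either write down and verify a concrete gadget satisfying (i)--(iii) above, or develop the generalized blossom/augmenting-structure argument, and at that point the honest course (as the paper itself takes) is simply to cite \cite{Hell84}.
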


In fact, by combining Theorems \ref{thm:basic_coalitions} and \ref{thm:hell}, it is possible to prove the following result.

\begin{theorem}\label{thm:unweightedSocialOpt}
Given an unweighted graph $G$, there exists a polynomial time algorithm for computing a coalition structure $\outcome^*$ maximizing the social welfare.
\end{theorem}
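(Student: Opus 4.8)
The plan is to combine Theorem~\ref{thm:basic_coalitions} with the algorithmic result of Theorem~\ref{thm:hell} by exhibiting an exact correspondence between the social welfare of a basic coalition structure and the quantity that Hell's algorithm optimizes. First I would use Theorem~\ref{thm:basic_coalitions} to restrict attention to coalition structures composed exclusively of basic coalitions: since every outcome can be transformed into one made of basic coalitions without decreasing the social welfare, there is always an optimum $\outcome^*$ of this form, and maximizing $\SW$ over all outcomes is equivalent to maximizing it over outcomes into basic coalitions.

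The key step is the following observation about the per-node contribution of each basic coalition type. Using $\SW(C_i)=\frac{2W(E_{C_i})}{|C_i|-1}$ on unweighted graphs, a coalition isomorphic to $K_1$ contributes $0$ (it holds a single isolated agent), a coalition isomorphic to $K_2$ has one edge and contributes $\frac{2\cdot 1}{2-1}=2$, and a coalition isomorphic to $K_3$ has three edges and contributes $\frac{2\cdot 3}{3-1}=3$. Crucially, in both the $K_2$ and the $K_3$ case the social welfare equals the number of nodes in the coalition (namely $2$ and $3$), while a $K_1$ contributes $0$ with its single node. Hence, for any coalition structure $\outcome$ made of basic coalitions, $\SW(\outcome)$ equals exactly the total number of nodes lying in coalitions isomorphic to $K_2$ or $K_3$.

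Having established this identity, the algorithm is immediate: I would run the polynomial-time procedure of Theorem~\ref{thm:hell}, which computes a partition of $N$ into sets inducing $K_1$, $K_2$, or $K_3$ that maximizes the number of nodes in the $K_2$ or $K_3$ parts. By the identity above, the returned partition $\outcome^*$ maximizes $\SW$ among all basic coalition structures, and by Theorem~\ref{thm:basic_coalitions} this value is also the global maximum over all outcomes. Thus $\outcome^*$ is a social welfare maximizer and is produced in polynomial time.

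I do not expect any serious obstacle here: essentially all the difficulty has already been absorbed into the two preceding results. The only genuinely load-bearing point is the per-node accounting in the second paragraph—namely, recognizing that the averaging by $|C|-1$ makes each node of a $K_2$ or $K_3$ contribute precisely one unit—which is exactly what aligns the social welfare objective with Hell's node-count objective. Once that alignment is noted, the remainder is routine bookkeeping.
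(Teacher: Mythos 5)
Your proof is correct and follows essentially the same route as the paper: reduce to basic coalition structures via Theorem~\ref{thm:basic_coalitions}, observe that every node in a $K_2$ or $K_3$ coalition contributes exactly $1$ to the social welfare while $K_1$ nodes contribute $0$, and conclude that the partition from Theorem~\ref{thm:hell} is optimal. No gaps.
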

\begin{proof}
By Theorem \ref{thm:basic_coalitions}, there must exist an optimal outcome $\outcome^*=(C^*_1,\dots,C^*_n)$ in which, for all $i=1,\dots,n$, $C^*_i$ is a basic coalition. Notice that any node in a basic coalition isomorphic to $K_1$ does not contribute to the social welfare, while all nodes in other coalitions contribute $1$ to $\SW(\outcome^*)$.
It follows that, in order to maximize the social welfare, the number of nodes belonging to coalitions isomorphic to $K_2$ or $K_3$ has to be maximized, and therefore the solution computed in Theorem \ref{thm:hell} is optimal also for our problem.
\end{proof}

In \cite{KKP16} the authors show that the price of stability of modified unweighted fractional hedonic games is $1$, without considering complexity issues. 
The different characterization of the optimum done in Theorem \ref{thm:basic_coalitions} allows us to first compute in polynomial time an outcome that maximizes the social welfare (done in Theorem \ref{thm:unweightedSocialOpt}) and then to transform this optimal outcome into a strong Nash without worsening its social welfare, again by a polynomial time transformation.
The following theorem completes this picture by providing a polynomial time algorithm for transforming an optical outcome into a strong Nash with the same social welfare, thus also proving that the strong price of stability is $1$.

\begin{theorem}\label{thm:calcoloStrongNash}
Given an unweighted graph $G$, 
it is possible to compute in polynomial time
%there exists 
%a polynomial time algorithm for computing 
an outcome $\outcome \in \sne[n]$ and such that $\SW(\outcome)=\SW(\outcome^*)$.%, where having social welfare equal to the one of an optimal solution.
\end{theorem}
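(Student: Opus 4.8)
The plan is to start from the optimal outcome $\outcome^*$ produced by Theorem~\ref{thm:unweightedSocialOpt}, which by construction consists only of basic coalitions, and to transform it into a strong Nash stable outcome without changing its social welfare. The driving observation is that on an unweighted graph every agent has $u_i \le \delta^i_{max}=1$, and that this bound holds with equality precisely for the agents lying in a $K_2$ or a $K_3$ coalition. Such agents can therefore never strictly improve, and hence can never take part in a profitable deviation, individual or joint; every potentially destabilising agent must have utility strictly below $1$. In $\outcome^*$ these are exactly the positive-degree vertices sitting in singleton ($K_1$) coalitions, so proving the theorem reduces to relocating the agents of utility below $1$ so that none of them, alone or together, can gain.

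First I would record the structure of $\outcome^*$ seen as a maximum cover of $G$ by vertex-disjoint edges and triangles (this is what Theorems~\ref{thm:basic_coalitions} and~\ref{thm:hell} compute). A short exchange argument shows that in a maximum cover the uncovered singleton vertices form an independent set, that none of them is adjacent to a triangle vertex (otherwise breaking the triangle into an edge plus a new edge would cover one more vertex), and that no singleton is adjacent to both endpoints of a covering edge (otherwise that triple is a triangle, again covering one more vertex). Hence every positive-degree singleton is adjacent only to endpoints of $K_2$-coalitions, and to at most one endpoint of each.

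The transformation then absorbs each such singleton into an adjacent edge-coalition, turning that edge into a star centred at the touched endpoint. The key accounting fact is that a star $K_{1,t}$ has $\SW=\frac{2t}{t}=2$ for every $t\ge 1$, exactly the contribution of the original $K_2$, while the absorbed singleton stops contributing $0$; thus the social welfare is preserved at each absorption and stays equal to $\SW(\outcome^*)$. Since a singleton may be adjacent to several candidate centres, where it attaches matters: a leaf placed on a centre of current load $\lambda$ obtains utility $1/\lambda$, so the leaves are effectively playing a restricted load-balancing (singleton congestion) game in which everyone prefers the least loaded admissible centre. I would compute a balanced assignment in polynomial time — for instance the one lexicographically minimising the sorted vector of star sizes, reachable by best-response dynamics admitting the standard congestion potential — noting that every such move keeps each star of size at least two and hence preserves $\SW$.

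The last and hardest step is to certify that the resulting balanced star outcome is $n$-strong Nash, not merely Nash. Individual stability follows from the load-balancing equilibrium condition together with the fact that a leaf created by absorption has exactly one neighbour, its centre, inside any star it could join (a consequence of the maximum-cover structure, which forbids a singleton from touching both endpoints of a covering edge). The main obstacle is ruling out coordinated deviations of several low-utility agents at once. Here I would use that all centres and triangle vertices sit at utility $1$ and so never move, that the leaves form an independent set and hence create no new edges when regrouped, and that in a lexicographically balanced assignment any simultaneous reshuffle (a swap, or more generally a cyclic shift of leaves among centres) strictly helps at most the agents moving onto strictly-less-loaded centres while strictly hurting at least one agent on the corresponding over-loaded side, so that not all deviators can gain. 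Making this robustness argument fully rigorous — in particular handling the original edge-endpoints that have become leaves and may retain neighbours in other coalitions, which is what can raise the number of potential deviators during the transformation — is where the real work lies.
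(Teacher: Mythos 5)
Your proposal is correct and follows essentially the same route as the paper: reduce to the basic-coalition optimum, show by exchange arguments that positive-degree singletons can only attach to one designated endpoint of $K_2$-coalitions (so the residual problem is a restricted load-balancing game on star centres whose stars each contribute $2$, preserving the welfare), and then take a strong equilibrium of that game. The step you flag as ``the real work'' --- certifying that a balanced assignment resists joint deviations --- is precisely what the paper outsources to the known polynomial-time computation of strong equilibria in singleton congestion games with identical latencies \cite{HHKS13}, and your lexicographic-minimum argument is the standard proof of that fact.
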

\begin{proof}
Let $\outcome^*$ be the optimal outcome computed in polynomial time by  Theorem \ref{thm:unweightedSocialOpt}.
Let $N'\subseteq N$ the set of agents belonging in $\outcome^*$ to coalitions isomorphic to $K_2$ or $K_3$. Notice that $\SW(\outcome^*) = |N'|$. No agent in $i \in N'$ can have an incentive in changing her strategy (and thus can belong to any deviating subset of agents), because $u_i(\outcome)=1$ and a node can gain at most $1$ in any outcome.
Therefore, if $N'=N$, then $\outcome^*$ is also a strong Nash equilibrium and the claim directly follows. 

In order to complete the proof, it is sufficient to (i) show the existence of  a dynamics involving only the set of agents $K \subseteq N''$, where $N''= N \setminus N'$, and leading to a strong Nash outcome $\outcome$; (ii) providing a polynomial time algorithm for computing $\outcome$.

For any $h=1,2,3$, let $\outcome^*_h \subseteq \outcome^*$ be the set containing all coalitions of $\outcome^*$ isomorphic to $K_h$. We first provide some useful properties of nodes in $N''$:
\begin{itemize}
\item[(P1)] For any couple of distinct nodes $i,j \in N''$, edge $(i,j) \not\in E$, because otherwise the social welfare of $\outcome^*$ could be improved by putting $i$ and $j$ in the same coalition: a contradiction to the optimality of $\outcome^*$.

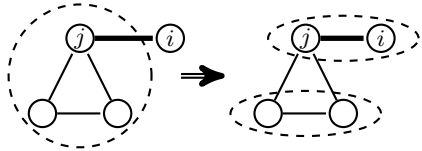
\begin{figure}[h]
\centering

 \scalebox{1}{
   \begin{tikzpicture}[font=\footnotesize,-,>=stealth',shorten >=1pt,auto,node distance=3cm,
  thick,main node/.style={circle,draw}, node/.style={sloped,anchor=south,auto=false}]
  \tikzset{edge/.style = {- = latex'}}

\node (1) at (0, 0) {};
  \node[main node] (1) at (0,0) {};
\node (2) at (1, 0) {};
  \node[main node] (2) at (1,0) {};
\node (3) at (0.5, 1) {$j$};
  \node[main node] (3) at (0.5, 1) {};
\node (4) at (1.7, 1) {$i$};
  \node[main node] (4a) at (1.7, 1) {};

\draw[edge] (1) to (2);
\draw[edge] (2) to (3);
\draw[edge] (1) to (3); 
\draw[edge, line width=2] (3) to (4);

\node (a) at (1.7, 0.5) {};
\node (b) at (2.6, 0.5) {};

\draw[dashed] (0.5,0.5) circle (0.95cm);

\draw[double,->] (a) to (b) {};

\node (1a) at (3, 0) {};
  \node[main node] (1a) at (3,0) {};
\node (2a) at (4, 0) {};
  \node[main node] (2a) at (4,0) {};
\node (3a) at (3.5, 1) {$j$};
  \node[main node] (3a) at (3.5, 1) {};
\node (4a) at (4.5, 1) {$i$};
  \node[main node] (4a) at (4.5, 1) {};

\draw[edge] (1a) to (2a);
\draw[edge] (2a) to (3a);
\draw[edge] (1a) to (3a); 
\draw[edge, line width=2] (3a) to (4a);

\draw[dashed] (3.5,0) ellipse (1cm and 0.3cm);
\draw[dashed] (4, 1) ellipse (1cm and 0.3cm);

\end{tikzpicture}
}
  \caption{Proof of (P2).}\label{fig:prop2}
\end{figure}

\item[(P2)] For any $i \in N''$ and any vertex $j$ belonging to a coalition in $\outcome^*_3$, edge $(i,j) \not\in E$, because otherwise the social welfare of $\outcome^*$ could be improved by removing $j$ from her current coalition and putting her in the same coalition of $i$: a contradiction to the optimality of $\outcome^*$ ( see Figure~\ref{fig:prop2}).

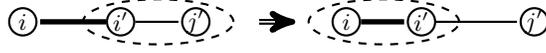
\begin{figure}[h]
\centering
 \scalebox{1}{
   \begin{tikzpicture}[font=\footnotesize,-,>=stealth',shorten >=1pt,auto,node distance=3cm,
  thick,main node/.style={circle,draw}, node/.style={sloped,anchor=south,auto=false}]
  \tikzset{edge/.style = {- = latex'}}

\node (0) at (-1.3, 0) {$i$};
  \node[main node] (0) at (-1.3,0) {};
\node (1) at (0, 0) {$i'$};
  \node[main node] (1) at (0,0) {};
\node (2) at (1, 0) {$j'$};
  \node[main node] (2) at (1,0) {};

\draw[edge] (1) to (2); 
\draw[edge, line width=2] (1) to (0); 

\node (a) at (1.7, 0) {};
\node (b) at (2.5, 0) {};

\draw[dashed] (0.5,0) ellipse (1cm and 0.3cm);

\draw[double,->] (a) to (b) {};

\node (1a) at (3, 0) {$i$};
  \node[main node] (1a) at (3,0) {};
\node (2a) at (4, 0) {$i'$};
  \node[main node] (2a) at (4,0) {};
\node (3a) at (5.5, 0) {$j'$};
  \node[main node] (3a) at (5.5, 0) {};
%\node (4a) at (6.5, 0) {};
%  \node[main node] (4a) at (6.5, 0) {};

\draw[edge, line width=2] (1a) to (2a);
\draw[edge] (2a) to (3a);
%\draw[edge] (4a) to (3a); 

\draw[dashed] (3.5,0) ellipse (1cm and 0.3cm);
%\draw[dashed] (6, 0) ellipse (1cm and 0.3cm);

\end{tikzpicture}
}
  \caption{Proof of (P3).}\label{fig:prop3}
\end{figure}

\item[(P3)] For any couple of distinct nodes $i,j \in N''$ and any coalition $\{i',j'\} \in \outcome^*_2$, if there exists an edge connecting node $i$ to a node in $\{i',j'\}$ (assume without loss of generality to node $i'$, i.e. assume that $(i,i') \in E$), then  edge $(j,j') \not\in E$, because otherwise the social welfare of $\outcome^*$ could be improved by removing $i'$ and $j'$ from their current coalition and putting them in the same coalition of $i$ and $j$, respectively: a contradiction to the optimality of $\outcome^*$ (see Figure~\ref{fig:prop3}).

\begin{figure}[h]
\centering
 \scalebox{1}{
   \begin{tikzpicture}[font=\footnotesize,-,>=stealth',shorten >=1pt,auto,node distance=3cm,
  thick,main node/.style={circle,draw}, node/.style={sloped,anchor=south,auto=false}]
  \tikzset{edge/.style = {- = latex'}}

\node (0) at (0, -1.5) {$i$};
  \node[main node] (0) at (0,-1.5) {};
\node (1) at (0, 0) {$i'$};
  \node[main node] (1) at (0,0) {};
\node (2) at (1, 0) {$j'$};
  \node[main node] (2) at (1,0) {};

\draw[edge] (1) to (2); 
\draw[edge] (1) to (0);

\node (1b) at (2.5, 0) {$j''$};
  \node[main node] (1b) at (2.5,0) {};
\node (2b) at (3.5, 0) {$i''$};
  \node[main node] (2b) at (3.5,0) {};
\node (3b) at (3.5, -1.5) {$j$};
  \node[main node] (3b) at (3.5,-1.5) {};

\draw[edge] (1b) to (2b); 
\draw[edge] (2b) to (3b); 
\draw[edge, line width=2] (2) to (1b);

\node (a) at (4, -1) {};
\node (b) at (5, -1) {};

\draw[dashed] (0.5,0) ellipse (1cm and 0.3cm);
\draw[dashed] (3,0) ellipse (1cm and 0.3cm);

\draw[double,->] (a) to (b) {};

%
%
%\node (1a) at (3, 0) {$i$};
%  \node[main node] (1a) at (3,0) {};
%\node (2a) at (4, 0) {$i'$};
%  \node[main node] (2a) at (4,0) {};
%\node (3a) at (5.5, 0) {$j'$};
%  \node[main node] (3a) at (5.5, 0) {};
%%\node (4a) at (6.5, 0) {};
%%  \node[main node] (4a) at (6.5, 0) {};
%
%\draw[edge] (1a) to (2a);
%\draw[edge] (2a) to (3a);
%%\draw[edge] (4a) to (3a); 
%
%\draw[dashed] (3.5,0) ellipse (1cm and 0.3cm);
%%\draw[dashed] (6, 0) ellipse (1cm and 0.3cm);

\end{tikzpicture}
} 
 \scalebox{1}{  \begin{tikzpicture}[font=\footnotesize,-,>=stealth',shorten >=1pt,auto,node distance=3cm,
  thick,main node/.style={circle,draw}, node/.style={sloped,anchor=south,auto=false}]
  \tikzset{edge/.style = {- = latex'}}

\node (0) at (0, -1.5) {$i$};
  \node[main node] (0) at (0,-1.5) {};
\node (1) at (0, 0) {$i'$};
  \node[main node] (1) at (0,0) {};
\node (2) at (1, 0) {$j'$};
  \node[main node] (2) at (1,0) {};

\draw[edge] (1) to (2); 
\draw[edge] (1) to (0);

\node (1b) at (2.5, 0) {$j''$};
  \node[main node] (1b) at (2.5,0) {};
\node (2b) at (3.5, 0) {$i''$};
  \node[main node] (2b) at (3.5,0) {};
\node (3b) at (3.5, -1.5) {$j$};
  \node[main node] (3b) at (3.5,-1.5) {};

\draw[edge] (1b) to (2b); 
\draw[edge] (2b) to (3b); 
\draw[edge, line width=2] (2) to (1b);

\draw[dashed] (0,-0.75) ellipse (0.4cm and 1.1cm);
\draw[dashed] (1.75,0) ellipse (1.1cm and 0.3cm);
\draw[dashed] (3.5,-0.75) ellipse (0.4cm and 1.1cm);

\end{tikzpicture}
}
  \caption{Proof of (P4).}\label{fig:prop4}
\end{figure}
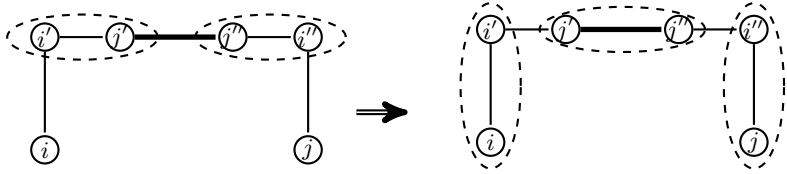
\item[(P4)] For any couple of distinct nodes $i,j \in N''$ and any couple of coalitions $\{i',j'\},\{i'',j''\}\in\outcome^*_2$, if there exist an edge connecting node $i$ to a node in $\{i',j'\}$ (assume without loss of generality to node $i'$, i.e. assume that $(i,i') \in E$), and another edge connecting node $j$ to a node in $\{i'',j''\}$ (assume without loss of generality to node $i''$, i.e. assume that $(j,i'') \in E$), then  edge $(j',j'') \not\in E$, because otherwise the social welfare of $\outcome^*$ could be improved by removing $i'$, $i''$ and $j'$ from their current coalition and putting them in the same coalition of $i$, $j$ and $j''$,  respectively: a contradiction to the optimality of $\outcome^*$ (see Figure~\ref{fig:prop4}).
\end{itemize}

Consider an \emph{initial} dynamics, ending in outcome $\outcome^0$, in which every agent in $i \in N''$ unilaterally moves in order to increase her utility (that in $\outcome^*$ is $0$). By properties (P1) and (P2) it follows that, for any $i \in N''$, $i$ selects a coalition in $\outcome^*_2$ and by property (P3) it follows that after this \emph{initial} dynamics, all coalitions in $\outcome^0 \setminus \outcome^*$ (i.e., all coalitions modified by this initial dynamics) are isomorphic to star graphs, i.e. only one node has degree greater than $1$. 
%Given any outcome $\outcome$, let $\CS(\outcome) \subseteq N$ be the set of these nodes having degree grater than $1$ and belonging to coalitions isomorphic to star networks.  

Consider now a sequence of improving moves performed by any subset of agents $K \subseteq N$ and such that for any $i \in K$, agent $i$ improves her utility after this move. 
For any $t \geq 1$, let $\outcome^t$ be the outcome reached after the $t$-th move of this dynamics and $K^t$ be the set of moving agents.
We want to show that this dynamics converges, i.e., that a strong Nash equilibrium is eventually reached. 

By properties (P3) and (P4) it follows that:
\begin{itemize}
\item[(P5)] For any coalition in $\outcome^*_2$, there exists an agent that will always have utility $1$ during any dynamics; let $\bar{N} \subseteq N$ the set containing these nodes. 
Clearly, every agent in $\bar{N}$, as well as all nodes belonging to coalitions in $\outcome^*_3$, will never belong to a subset of nodes performing an improving move and therefore will always remain in the same coalition she belongs in $\outcome^*$. 

\item[(P6)] For any $t\geq 1$, and any agent $i \in K^t$ (potentially $i$ could be an agent of a coalition in $\outcome^*_1$ or also an agent of a coalition in $\outcome^*_2$ not belonging to $\bar{N}$), $\outcome^t(i)$ is such that there exists a unique $j \in \outcome^t(i) \cap \bar{N}$ and $i$ will have a unique edge in $\outcome^t(i)$ connecting her to $j$.
\end{itemize}

By properties (P5) and (P6), the only nodes participating in the dynamics  are nodes either belonging to coalitions in $\outcome^*_1$ or belonging to coalitions in $\outcome^*_2$ but not belonging to $\bar{N}$; let $\bbar(N) $ be the set of these nodes, i.e., for any $t>1$, $K^t \subseteq \bbar(N)$.

%By property (P6), for any $t\geq 1$ and any agent $i \in K^t$, since $i$ improves her utility, it holds that $|\outcome^t(i)| <  |\outcome^{t-1}(i)|$.
%Given any $t \geq 0$, consider the vector $\V(t) \in \NN^{|\bbar(N)|}$ containing all values in multiset $\{ |\outcome^t(i)| | i \in \bbar(N)\}$ sorted in non increasing order.
%For any $t \geq 1$, consider agent $i$ maximizing $|\outcome^{t-1}(i)|$ among all agents in $K^t$; it is easy to check that, since all agents in $K^t$ improve their utility, it holds that (i) $|\outcome^{t}(i)| < |\outcome^{t-1}(i)|$ and (ii) $|\outcome^{t}(j)| \leq |\outcome^{t}(i)|$ for any $j \in K^t$. Moreover, it also holds that for any other agent $j \in \bbar(N) \setminus K^t$ and such that $|\outcome^{t}(j)| \neq |\outcome^{t-1}(j)|$, it holds that (iii) $|\outcome^{t}(j)| \leq |\outcome^{t-1}(i)|-1$ because otherwise there would exists an agent in $K^t$ worsening her utility: a contradiction.

%By (i), (ii) and (iii), it directly follows that, for any $t \geq 1$, $\V(t)$ is lexicographically smaller than  $\V(t-1)$. Since the set of possible vectors is finite, we conclude that the dynamics converges to a strong Nash equilibrium. 

In order to obtain a strong Nash equilibrium, we notice that the ``residual'' game played by agents in $\bbar(N)$ is equivalent to a \emph{singleton congestion game with identical latency functions} (CGI), in which we also have a set of resources (i.e. a strong Nash equilibrium in this new game is also a strong Nash equilibrium in our game and vice versa). 
In a CGI, agent's strategy consists of a resource. The delay of a resource is given by the number of agents choosing it, and the cost that each agent aims at minimizing is the delay of her selected resource.
In particular, the set of agents is $\bbar(N)$ and the set of resources is $\bar{N}$. In fact, in our ``residual'' game every agent aims at minimizing the cardinality of the star coalition she belongs to.
In \cite{HHKS13} it has been shown how to compute in polynomial time a strong Nash equilibrium for a class of congestion games including the one of CGI.

Let us call $\outcome$ the obtained strong Nash equilibrium. It remains to show that $\SW(\outcome)=\SW(\outcome^*)$. Observe that the difference between $\outcome$ and $\outcome^*$ is that some coalitions belonging to $\outcome^*$ isomorphic to $K_2$ becomes a coalition isomorphic to a star graph in $\outcome$, and that some coalitions belonging to $\outcome^*$ isomorphic to $K_1$ disappears in $\outcome$. The claim follows by noticing that the contribution to the social welfare of a coalition isomorphic to $K_1$ is zero, and that the contribution to the social welfare of a coalition isomorphic to $K_2$ (whose value is $2$) is the same as the one of a coalition isomorphic to a star graph.
\end{proof}

As a direct consequence of Theorem \ref{thm:calcoloStrongNash}, the following corollary holds.
\begin{corollary}\label{cor:spos1}
For any unweighted graph $G$ and any $k=1,\dots,n$, $\spos[k](\ga(G))=1$.
\end{corollary}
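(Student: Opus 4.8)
The plan is to read the corollary off directly from Theorem~\ref{thm:calcoloStrongNash}, which already produces, for any unweighted graph $G$, a single outcome $\outcome$ that is simultaneously an $n$-strong Nash equilibrium, i.e. $\outcome \in \sne[n](\ga(G))$, and socially optimal, i.e. $\SW(\outcome) = \SW(\outcome^*(\ga(G)))$. The whole task is then to argue that this one witness certifies $\spos[k](\ga(G)) = 1$ simultaneously for every level $k$ of the strong-equilibrium hierarchy.

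First I would invoke the monotonicity of that hierarchy recorded in the Preliminaries: for every $k$ with $2 \le k \le n$ one has $\sne[k](\ga(G)) \subseteq \sne[k-1](\ga(G))$, so by transitivity $\sne[n](\ga(G)) \subseteq \sne[k](\ga(G))$ for all $1 \le k \le n$ (the base case being $\sne[1](\ga(G)) = \pne(\ga(G))$). Consequently the outcome $\outcome$ of Theorem~\ref{thm:calcoloStrongNash} belongs to $\sne[k](\ga(G))$ for every $k \in \{1,\dots,n\}$. Plugging $\outcome$ into the definition of the $k$-strong price of stability, and recalling that $\spos[k](\ga(G))$ is the minimum of $\SW(\outcome^*(\ga(G)))/\SW(\outcome')$ over all $\outcome' \in \sne[k](\ga(G))$, the fact that $\outcome$ is an admissible minimizer with $\SW(\outcome) = \SW(\outcome^*(\ga(G)))$ forces this minimum to be at most $1$. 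Combining this upper bound with the general lower bound $\spos[k](\ga(G)) \ge 1$ stated in the Preliminaries yields $\spos[k](\ga(G)) = 1$ for each $k = 1,\dots,n$.

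I do not expect any genuine obstacle: all the technical weight was already discharged in Theorems~\ref{thm:basic_coalitions}--\ref{thm:calcoloStrongNash}, and the corollary is purely a matter of the hierarchy inclusion together with the definition of $\spos[k]$. The only subtlety worth stating explicitly is that a \emph{single} optimal $n$-strong outcome settles all the levels $k$ at once, so that no separate construction per level is required.
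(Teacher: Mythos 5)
Your proposal is correct and matches the paper, which simply states the corollary as a direct consequence of Theorem~\ref{thm:calcoloStrongNash}: the optimal $n$-strong Nash outcome produced there lies in $\sne[k](\ga(G))$ for every $k$ by the inclusion $\sne[k]\subseteq\sne[k-1]$, giving $\spos[k]\le 1$, and the reverse inequality is the general bound from the Preliminaries.
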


\subsection{Strong Price of Anarchy}
In this subsection we study the strong price of anarchy for unweighted graphs. 

\begin{theorem}\label{thm:strong_price_of_anarchy_at_least_2}
Given any $\epsilon>0$, there exists an unweighted graph $G$ such that $\spoa[n](\ga(G))\geq 2-\epsilon$.
\end{theorem}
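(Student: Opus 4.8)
The plan is to construct, for a large integer parameter $N$, an explicit unweighted graph $G$ on $2N+1$ vertices together with an outcome $\outcome$ that is a genuine $n$-strong Nash equilibrium whose social welfare is essentially half of the optimal one; letting $N\to\infty$ then forces $\spoa[n](\ga(G))$ above $2-\epsilon$. First I would describe $G$. Take a \emph{center} $c$, a set of \emph{leaves} $l_1,\dots,l_N$, and a set of \emph{partners} $h_1,\dots,h_N$, with edges $(c,l_i)$ for every $i$ (so $c$ and the leaves form a star $K_{1,N}$), the matching edges $(l_i,h_i)$ for every $i$, and all edges $(h_i,h_j)$ with $i\neq j$ (so the partners form a clique $K_N$). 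Every leaf then has degree exactly $2$, being adjacent only to $c$ and to its own partner $h_i$, and the leaves are pairwise non-adjacent.

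Next I would exhibit the two outcomes and compare their welfare. Let $\outcome$ have the two (non-empty) coalitions $A=\{c,l_1,\dots,l_N\}$ (the star) and $B=\{h_1,\dots,h_N\}$ (the clique). Using $\SW(C)=\frac{2W(E_C)}{|C|-1}$ gives $\SW(A)=\frac{2N}{N}=2$ and $\SW(B)=\frac{N(N-1)}{N-1}=N$, hence $\SW(\outcome)=N+2$. For the optimum it is enough to use the coalition structure formed by the $N$ edges $\{l_i,h_i\}$ (each isomorphic to $K_2$, contributing $2$) with $c$ left isolated, which is feasible since every $(l_i,h_i)\in E$; thus $\SW(\outcome^*(\ga(G)))\ge 2N$. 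Consequently
\[
\spoa[n](\ga(G)) \ge \frac{\SW(\outcome^*(\ga(G)))}{\SW(\outcome)} \ge \frac{2N}{N+2},
\]
and choosing $N\ge 4/\epsilon$ makes the right-hand side at least $2-\epsilon$.

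The main obstacle — and the only part needing a genuine argument — is to show $\outcome\in\sne[n]$: that no set of agents has a joint deviation after which all of them strictly improve. The key point is that in $\outcome$ the center $c$ and every partner $h_i$ have utility $1$, which by the Preliminaries bound $u_i\le\delta^i_{max}=1$ is the largest attainable; none of them can strictly improve, so no blocking set may contain any of them. Hence a candidate blocking set $K$ consists only of leaves (each of utility $1/N$), and in the deviated outcome $\outcome'$ every happy agent keeps its coalition — in particular all $N$ partners stay together in $B$ and $c$ retains its coalition $A'$. A deviating leaf placed \emph{outside} $A'$ is then adjacent to at most its unique partner inside a coalition of size at least $N$ (if it joined $B$) or to nobody (if it joined a coalition of leaves), so its utility is at most $1/N$ and it does not improve; thus, were $K$ blocking, every deviating leaf would have to lie in $A'$. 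But $A'$ also keeps all non-deviating leaves, so it would then contain all $N$ leaves, each again of utility $1/N$ — still no improvement. In both cases some agent of $K$ fails to strictly improve, a contradiction; therefore $\outcome$ is an $n$-strong Nash equilibrium.

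Finally I would remark on why the gadget is shaped this way, since this also pinpoints the difficulty. In any strong Nash outcome two adjacent dissatisfied agents could split off into a profitable $K_2$, so the agents of utility below $1$ must form an independent set; this caps how many of them an optimal $K_2/K_3$ packing can cover, which is exactly why a single large star cannot on its own beat the optimum. The role of the clique $B$ is precisely to supply the $N$ partners that let the optimum pair off every leaf (doubling the welfare) while, in $\outcome$, remaining perfectly satisfied and hence irrelevant to any deviation — and this is what pushes the ratio all the way to $2$.
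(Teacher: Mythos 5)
Your construction is exactly the paper's: the same graph (a center joined to $N$ leaves, each leaf matched to a member of an $N$-clique), the same equilibrium outcome $\{A,B\}$ with welfare $N+2$ versus the matching optimum of welfare $2N$, giving the ratio $\frac{2N}{N+2}\to 2$. The only difference is that you spell out the verification that $\outcome$ is $n$-strong Nash stable in more detail than the paper (which leaves it as "easy to see"), and your case analysis there is correct.
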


\begin{proof}
\begin{figure}[h]
\centering
 \scalebox{.8}{  \begin{tikzpicture}[font=\footnotesize,-,>=stealth',shorten >=1pt,auto,node distance=3cm,
  thick,main node/.style={circle,draw}]
  \tikzset{edge/.style = {-,> = latex'}}

\node (1) at (2.5, 0) {};
  \node[main node] (1) at (2.5,0) {};
\node (2) at (1, 1) {};
  \node[main node] (2) at (1,1) {};
\node (3) at (2, 1) {};
  \node[main node] (3) at (2,1) {};
\node (4) at (3, 1) {$\dots$};
\node (5) at (4, 1) {};
  \node[main node] (5) at (4,1) {};

\node (6) at (1, 2) {};
  \node[main node] (6) at (1,2) {};
\node (7) at (2, 2) {};
  \node[main node] (7) at (2,2) {};
\node (8) at (3, 2) {$\dots$};
\node (9) at (4, 2) {};
  \node[main node] (9) at (4,2) {};

\draw[edge] (1) to (2);
\draw[edge] (1) to (3);
\draw[edge] (1) to (5);

\draw[edge] (6) to (2);
\draw[edge] (7) to (3);
\draw[edge] (9) to (5);

\draw[bend right, edge] (6) to (7);
\draw[bend left, edge] (6) to (9);
\draw[bend right, edge] (9) to (7);
\draw[dashed] (2.5, 2.1) ellipse (2cm and 0.5cm);
\end{tikzpicture}
}
  \caption{The graph $G$.}\label{fig:clique}
\end{figure}
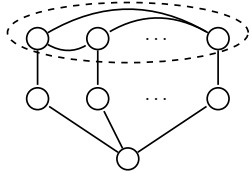
Let us consider the graph $G$ depicted in Figure~\ref{fig:clique}. The number of nodes in $G$ is $n=2k+1$. Specifically, we have $k$ agents $\{1,\ldots,k\}$ in the first (upper) layer, and other $k$ agents $\{k+1,\ldots,2k\}$ in the second layer. Moreover, the $k$ nodes in the upper layer form a clique. It is easy to see that the optimum solution $OPT$ has social welfare at least $\SW(OPT)\geq 2k$. In fact, the coalitions structure composed by $k$ non-empty coalitions corresponding to the $k$ matchings between agents of the first and second layer, i.e., for any $j=1,\ldots,k$, $C_j=\{j,k+j\}$ has social welfare exactly $2k$. A strong Nash stable outcome is given by the coalition structure $\outcome$ composed by two coalitions $\outcome=\{C_1, C_2\}$, where $C_1$ contains all the agents of the clique, while $C_2$ contains all the other agents. Indeed, on the one hand, all the agents belonging to the coalition $C_1$ get utility $1$ that is the maximum one they can get, which means that they do not have any interest on deviating from $\outcome$. Therefore, suppose by contradiction that $\outcome$ is not strong Nash stable, then the set of deviating agents must be a subset of the agents belonging to the coalition $C_2$. However, by using the fact that the two non-empty coalitions $C_1$ and $C_2$ of $\outcome$ contains the same number of agents, it is easy to see any subset of agents of $C_2$ cannot jointly deviate and all get higher utility with respect to $\outcome$. It follows that $\outcome$ is a strong Nash stable outcome. Since $\SW(\outcome)=k+2$, it follows that $\spoa[n] \geq \frac{2k}{k+2}$.
\end{proof}

\begin{theorem}\label{thm:strong_price_of_anarchy_at_most_2}
For any unweighted graph $G$, $\spoa[2](\ga(G))\leq 2$.
\end{theorem}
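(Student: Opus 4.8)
The plan is to lower-bound the social welfare of an arbitrary $2$-strong Nash stable outcome $\outcome$ by exhibiting a large set of agents each attaining the maximum possible utility, and then to compare this set with the structure of an optimal outcome as characterized in Theorem~\ref{thm:basic_coalitions}. I first record the elementary fact that in an unweighted graph every agent has utility at most $1$ in any outcome, since $\delta^{i}_{max}=1$.

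The crucial step is to extract from $2$-strong stability a covering property. I would fix any edge $(i,j)\in E$ and consider the joint deviation in which both $i$ and $j$ move to a currently empty coalition; the resulting coalition is exactly $\{i,j\}$, so each of them would obtain utility $\frac{w_{i,j}}{2-1}=1$. One first checks that an empty coalition always exists in a Nash stable $\outcome$ whenever $E\neq\emptyset$: otherwise all $n$ coalitions would be singletons, and any endpoint of an edge could unilaterally improve from $0$ to $1$, contradicting Nash stability. Because $\outcome$ is $2$-strong Nash stable, this joint move cannot strictly benefit both agents, so at least one of $u_i(\outcome),u_j(\outcome)$ already equals $1$. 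Letting $A=\{i\in N: u_i(\outcome)=1\}$, this says precisely that $A$ is a vertex cover of $G$, and since utilities are non-negative we get $\SW(\outcome)\geq \sum_{i\in A}u_i(\outcome)=|A|$.

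Next I would invoke Theorem~\ref{thm:basic_coalitions} to take an optimal outcome $\outcome^*$ consisting only of basic coalitions, recalling from Theorem~\ref{thm:unweightedSocialOpt} that then $\SW(\outcome^*)$ equals the number of nodes lying in coalitions isomorphic to $K_2$ or $K_3$. The vertex cover $A$ must cover the edges inside each such coalition: a $K_2$ (one edge, two nodes) contains at least one node of $A$, and a $K_3$ (triangle) contains at least two nodes of $A$, since a single vertex covers only two of a triangle's three edges. In either case the number of nodes of the coalition is at most twice the number of its nodes belonging to $A$. Summing this inequality over the pairwise disjoint optimal coalitions yields $\SW(\outcome^*)\leq 2|A|$, and combining with $\SW(\outcome)\geq|A|$ gives $\SW(\outcome^*)\leq 2\,\SW(\outcome)$, i.e.\ $\spoa[2](\ga(G))\leq 2$.

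I expect the main obstacle to be the covering step: the whole argument hinges on correctly identifying that the ``split off a single edge'' joint deviation is available (which is why the existence of an empty coalition must be verified) and on translating the failure of this deviation into the combinatorial statement that the maximum-utility agents form a vertex cover. Once this is in place, the remaining comparison with the optimum is a routine per-coalition counting argument using the $K_1/K_2/K_3$ decomposition, with the triangle being the tight case ($3$ nodes versus $2$ cover nodes) that makes the factor $2$ appear.
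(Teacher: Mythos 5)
Your proof is correct and follows essentially the same route as the paper's: the key stability consequence (for every edge, at least one endpoint must already have utility $1$, since otherwise the two endpoints could jointly split off into a pair and both improve) and the per-coalition counting over the $K_1/K_2/K_3$ decomposition of the optimum from Theorem~\ref{thm:basic_coalitions} are exactly the paper's argument. Your vertex-cover phrasing and the explicit verification that an empty candidate coalition is available for the joint deviation are just a slightly more careful packaging of the same idea.
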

\begin{proof}
Let $\outcome^*$ the optimal solution computed by Theorem \ref{thm:unweightedSocialOpt}, in which all coalitions are basic ones.

Consider any $2$-strong Nash equilibrium $\outcome$.

For any coalition $C^*=\{i,j\}$ of $\outcome^*$ isomorphic to $K_2$, on the one hand we have that $\SW(C^*)=2$. On the other hand, since $\outcome$ is
 a $2$-strong Nash stable outcome, $u_i(\outcome)=1$ or $u_j(\outcome)=1$, because otherwise $i$ and $j$ could jointly perform an improving move.
  Thus, $u_i(\outcome) + u_j(\outcome)\geq 1$, whereas $u_i(\outcome^*) + u_j(\outcome^*) =2$. 

For any coalition $C^*=\{i,j,k\}$ of $\outcome^*$ isomorphic to $K_3$, on the one hand we have that $\SW(C^*)=3$. On the other hand, since $\outcome$ is a $2$-strong Nash stable outcome, at least $2$ agents among $i,j,k$ must have utility $1$ in $\outcome$, because otherwise there would exist two agents aiming at jointly perform an improving move: a contradiction to the $2$-strong Nash stability of $\outcome$. Thus, $u_i(\outcome) + u_j(\outcome)+ u_k(\outcome) \geq 2$, whereas $u_i(\outcome^*) + u_j(\outcome^*)+ u_k(\outcome^*) =3$.

For any $h=1,2,3$, let $N_h \subseteq N$ be such that for any $j \in N_h$, $C^*_j$is isomorphic to $K_h$.
Since agents being in coalitions of $\outcome^*$ isomorphic to $K_1$ do not contribute to $\SW(\outcome^*)$, we obtain 
\begin{eqnarray*}
\frac{\SW(\outcome^*)}{\SW(\outcome)}& \leq &
\frac{\sum_{j \in N_2} \SW(C_j^*)  + \sum_{j \in N_3} \SW(C^*_j)}
{\sum_{j \in N_2} \sum_{i \in C^*_j}{u_i(\outcome)} + \sum_{j \in N_3} \sum_{i \in C^*_j}{u_i(\outcome)}}\\ 
&\leq&  
\frac{\sum_{j \in N_2} \SW(C_j^*)  + \sum_{j \in N_3} \SW(C^*_j)}
{\sum_{j \in N_2} \frac{1}{2} \SW(C^*_j) + \sum_{j \in N_3} \frac{2}{3}\SW(C^*_j)}\\
&\leq&  
\frac{\sum_{j \in N_2} \SW(C_j^*)  + \sum_{j \in N_3} \SW(C^*_j)}
{\frac{1}{2}\left(\sum_{j \in N_2} \SW(C^*_j) + \sum_{j \in N_3} \SW(C^*_j)\right)}=2%\\
%&=&2. 
\end{eqnarray*} \end{proof}

From Theorems \ref{thm:strong_price_of_anarchy_at_least_2} and \ref{thm:strong_price_of_anarchy_at_most_2}, we immediately get the following result.
\begin{corollary}
The strong price of anarchy for unweighted graphs is $2$.
\end{corollary}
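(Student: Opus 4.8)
The plan is to combine the two preceding theorems into the matching lower and upper bounds $\spoa[n](\ga(G)) \geq 2$ and $\spoa[n](\ga(G)) \leq 2$, recalling first that the unqualified ``strong price of anarchy'' denotes $\spoa[n]$, since by the convention fixed in the preliminaries an (unqualified) strong Nash equilibrium is an $n$-strong Nash equilibrium.

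For the lower bound I would invoke Theorem~\ref{thm:strong_price_of_anarchy_at_least_2} verbatim: it produces, for every $\epsilon>0$, an unweighted graph $G$ with $\spoa[n](\ga(G)) \geq 2-\epsilon$. As the strong price of anarchy over unweighted graphs is the supremum of $\spoa[n](\ga(G))$ over all such $G$, letting $\epsilon \to 0$ shows that this supremum is at least $2$. Note that no strengthening of that theorem is required, since it is already phrased for the $n$-strong notion rather than for a weaker one.

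For the upper bound the key step is the inclusion $\sne[n](\ga(G)) \subseteq \sne[2](\ga(G))$, which is the chain $\sne[k] \subseteq \sne[k-1]$ recorded in the preliminaries applied from $k=n$ down to $k=2$: every $n$-strong Nash stable outcome is in particular $2$-strong Nash stable. Hence the maximum of $\frac{\SW(\outcome^*)}{\SW(\outcome)}$ over $\outcome \in \sne[n](\ga(G))$ cannot exceed the same maximum taken over the larger set $\sne[2](\ga(G))$, i.e. $\spoa[n](\ga(G)) \leq \spoa[2](\ga(G))$; Theorem~\ref{thm:strong_price_of_anarchy_at_most_2} then bounds the right-hand side by $2$ for every unweighted graph $G$.

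Putting the two inequalities together gives that the strong price of anarchy for unweighted graphs equals $2$. I expect no real obstacle: the single point demanding care is the direction of the set inclusion between equilibrium notions, since the upper bound must be transferred from the weaker $2$-strong guarantee to the stronger $n$-strong quantity via $\sne[n] \subseteq \sne[2]$, rather than the other way round.
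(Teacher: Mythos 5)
Your proposal is correct and matches the paper's (implicit) argument: the corollary is stated as an immediate consequence of Theorems~\ref{thm:strong_price_of_anarchy_at_least_2} and~\ref{thm:strong_price_of_anarchy_at_most_2}, with the transfer from the $2$-strong upper bound to the $n$-strong quantity resting on exactly the inclusion $\sne[n](\ga(G)) \subseteq \sne[2](\ga(G))$ recorded in the preliminaries. You have simply made explicit the one step the paper leaves unsaid, and you apply the inclusion in the correct direction.
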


%LUCA: DIRE che basta 2 strong per abbassare il PoA

\section{Nash stable outcomes}\label{sec:Nash_equilibria}
In this section we consider Nash stable outcomes. We start by showing that there exists a graph $G$ containing negative edge-weights such that the game induced by $G$ admits no Nash stable outcome. This result is very similar to Lemma $1$ of~\cite{BFFMM14}.

\begin{theorem}\label{thm:no_existence_Nash_negative_weight}
There exists a graph G containing edges with negative weights such that ${\ga(G)}$ admits no Nash stable outcome. 
\end{theorem}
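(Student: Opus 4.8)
The plan is to exhibit an explicit weighted graph $G$ containing at least one negative-weight edge and to prove that \emph{every} coalition structure of $\ga(G)$ admits an improving move; since the set of outcomes is finite, this means the improving-move dynamics can never reach a sink, so $\ga(G)$ has no Nash stable outcome. Following the spirit of Lemma~1 of \cite{BFFMM14}, I would design $G$ so that the better-response relation on partitions contains a cycle---a ``rock--paper--scissors'' on coalition structures---rather than trying to forbid equilibria by a single global counting argument.

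First I would record the pruning observations that keep the case analysis finite and short. An isolated agent has utility $0$; hence in any Nash stable outcome every agent must have utility at least $0$, since otherwise she would desert her coalition to become a singleton. A positive-weight edge then acts as an attractor (its two endpoints each obtain the full weight when paired, so they prefer being together to being apart), whereas one strongly negative edge acts as a repeller that makes any coalition containing both its endpoints untenable for at least one of them. The essential ingredient, unavailable when all weights are non-negative, is that the utility $u_i(\outcome)=\sum_{j\in\outcome(i)}\frac{w_{i,j}}{|\outcome(i)|-1}$ averages only over the \emph{other} members of the coalition, so an agent's incentive to enter or to leave a coalition depends on that coalition's current size. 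I would tune the weights so that this size dependence, together with the repeller, produces a chain of strict improving moves that returns to its starting partition.

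To carry this out I would enumerate the finitely many partitions of the (small) vertex set, group them according to the coalition hosting the endpoints of the negative edge, and for each group exhibit an agent with a strict improving move; equivalently, I would display one explicit cyclic improving dynamics and then check that no partition lying outside the cycle is stable either. The main obstacle is precisely this last point: one must satisfy all the strict-inequality constraints imposed along the cycle while ruling out \emph{every} ``stuck'' configuration, i.e.\ every stable clustering in which an unhappy agent simply has no better coalition to move to. This is genuinely delicate---on three vertices no such graph exists, since whenever some edge is positive one checks that either a single pair (with its remaining vertex isolated) or the grand coalition is always stable---so the construction must use at least four vertices and asymmetric weights chosen so that the size-dependent averaging renders the induced preference over coalitions non-transitive. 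Once a valid $G$ is fixed, verifying the absence of a sink is a routine finite check.
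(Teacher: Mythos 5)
Your proposal correctly identifies the overall strategy (an explicit small weighted graph with one strongly negative edge, followed by an exhaustive case analysis showing every partition admits an improving move), and several of your structural observations are sound and match what the paper actually exploits: a sufficiently negative edge forbids cohabitation of its endpoints, a positive edge forces an otherwise-isolated agent to join its neighbour's coalition, three vertices do not suffice, and the size-dependent averaging is what creates the instability. However, the proposal stops exactly where the proof has to begin: you never exhibit the graph. Every load-bearing step is phrased conditionally (``I would design $G$ so that\dots'', ``I would tune the weights so that\dots'', ``Once a valid $G$ is fixed\dots''), so the one piece of content the theorem requires --- a concrete certificate together with its verification --- is absent. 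Asserting that ``verifying the absence of a sink is a routine finite check'' does not discharge the obligation, because whether such a graph exists at all is precisely what is being claimed; a reader cannot perform the finite check on an unspecified instance.

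For comparison, the paper's construction is a star on four nodes: a centre $i_2$ joined to $i_1$ and $i_3$ by edges of weight $10$ and to $i_4$ by an edge of weight $1$, plus an edge $(i_1,i_3)$ of weight $-M$. The analysis is then a short chain of forced conclusions rather than the exhibition of a better-response cycle: $i_1$ and $i_3$ can never coexist; $i_4$ must sit with $i_2$; if that coalition is exactly $\{i_2,i_4\}$ then $i_1$ deviates to it (gaining $10/2$); otherwise the coalition has size $3$, giving $i_2$ utility $11/2$, and $i_2$ deviates to the singleton coalition holding whichever of $i_1,i_3$ was excluded, gaining $10$. Your plan would likely converge to something of this shape once the weights were chosen, but as written the proof is incomplete: you need to supply the graph, the weights, and the case analysis explicitly.
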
 
%%%%%%%%%%%%%%%%%%%%%%%%%%%%%%%%%%%%%%%%%%%%%%%%%%%%%%%%%%%%%%%%%%%%%%%%

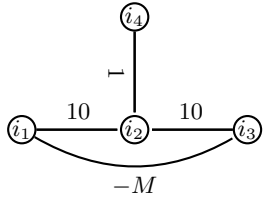
\begin{figure}
\centering
 \scalebox{1}{ \begin{tikzpicture}[font=\footnotesize,-,>=stealth',shorten >=1pt,auto,node distance=3cm,
  thick,main node/.style={circle,draw}]
  \tikzset{edge/.style = {-,> = latex'}}

\node (1) at (0, 0) {$i_1$};
  \node[main node] (1) at (0,0) {};
\node (2) at (1.5, 0) {$i_2$};
  \node[main node] (2) at (1.5, 0) {};
\node (3) at (3, 0) {$i_3$};
  \node[main node] (3) at (3,0) {};
\node (4) at (1.5, 1.5) {$i_4$};
  \node[main node] (4) at (1.5,1.5) {};

\draw[edge] (1) to (2);
\draw[edge] (3) to (2);
\draw[edge] (4) to (2); 
\draw[edge] (1) to node[sloped, anchor=center, above] {$10$} (2);
\draw[edge] (3) to node[sloped, anchor=center, above] {$10$} (2);
\draw[edge] (4) to node[sloped, anchor=center, below] {$1$} (2);
\draw[bend right, edge] (1) to node[sloped, anchor=center, below] {$-M$} (3);
\end{tikzpicture}}
  \caption{The graph $G$.}\label{fig:eq}
\end{figure}
\begin{proof}
Let $G$ be the graph in Figure~\ref{fig:eq} and 
fix a Nash stable outcome $\outcome$.
It is easy to see that, for $-M$ small enough, agents $i_1$ and $i_3$ cannot belong to the same coalition. By contrast, agents $i_4$ and $i_2$ must belong to the same coalition since otherwise the utility of $i_4$ would be zero. Let $C_j$ be the coalition containing agents $i_4$ and $i_2$. If $C_j=\{i_2, i_4\}$, then agent $i_1$ wants to join the coalition and improve her utility from zero to $10/2=5$ thus contradicting
the fact that $\outcome$ is Nash stable. If $C_j\supset\{i_2 , i_4\}$, then, since agents $i_1$ and $i_3$ cannot belong to the same coalition, it must be $|C_j|= 3$. Moreover, there exists a coalition $C_i$ containing exactly one between the two agents $i_1$ and $i_3$. Hence, we get
the utility of agent $i_2$ in $\outcome$ is $11/2< 10$, while $10$ is the utility in joining coalition $C_i$, which rises again a contradiction. Since all possibilities for $C_j$ have
been considered, it follows that a Nash stable outcome cannot exist.
\end{proof}

We further show that there exists a dynamic of infinite length for games played on unweighted graphs.  

\begin{theorem}\label{thm:noconvergenceNash}
There exists an unweighted graph $G$ such that ${\ga(G)}$ does not possess the finite improvement path property, even under best-response dynamics. 
\end{theorem}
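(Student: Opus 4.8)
The plan is to exhibit a single unweighted graph $G$ together with an explicit finite cyclic best-response dynamics, i.e., a sequence of outcomes $\outcome^0, \outcome^1, \ldots, \outcome^L$ with $\outcome^L = \outcome^0$ in which each $\outcome^{t+1}$ is obtained from $\outcome^t$ by one agent performing a best-response that strictly increases her utility. Since a strictly improving best-response is in particular an improving move, repeating such a cycle indefinitely produces a best-response improvement dynamics of infinite length, which directly contradicts the finite improvement path property and hence proves the theorem. Because best-response dynamics form a subclass of improving dynamics, exhibiting non-convergence for best responses simultaneously rules out the property in its strongest form.

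First I would design $G$ so that the unweighted utility $u_i(\outcome) = \frac{|\{j \in \outcome(i): (i,j) \in E\}|}{|\outcome(i)|-1}$ creates competing incentives: an agent prefers a coalition in which she has many neighbors, but the denominator $|\outcome(i)|-1$ penalizes large coalitions, so the coalition maximizing her utility shifts as other agents enter or leave. The guiding idea is to use a small set of ``shuttle'' agents whose relative attractiveness toward two or three target coalitions flips depending on the current sizes of those coalitions, so that moving one agent makes a second agent's current coalition less attractive (by changing its size and hence her ratio) while making a third coalition more attractive, thereby triggering the next move and propagating the cycle.

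Then I would write out the cycle step by step. For each step $t$ I would (i) specify which agent moves and into which coalition, (ii) compute her utility before and after the move to certify the strict improvement, and (iii) verify that the destination coalition is indeed a best-response, i.e., that no other coalition available in $\outcome^t$ (including forming a singleton, which yields utility $0$) would give her strictly higher utility. Finally I would check directly that after $L$ moves the coalition structure coincides with $\outcome^0$, so that the sequence closes into a genuine cycle.

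The main obstacle is the simultaneous tension among the three requirements imposed on every move: strict improvement for the mover, the move being a genuine best-response (not merely some improving move), and the global constraint that the sequence of configurations returns to its start. Because the utilities are ratios of integers, small changes in coalition size can overturn preferences discontinuously, so the degree sequence of $G$ and the coalition sizes traversed along the cycle must be tuned carefully. The bulk of the effort is therefore the finite but delicate case analysis verifying, at each of the $L$ steps, that the chosen destination strictly dominates every alternative for the moving agent in the current configuration.
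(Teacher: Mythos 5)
Your overall strategy is exactly the one the paper uses: exhibit a single unweighted graph together with an explicit cyclic best-response dynamics, verify that each step is a strictly improving best response, and observe that closing the cycle yields an infinite improvement path. However, your proposal never actually carries out any of this. You describe what you \emph{would} do --- design a graph with ``shuttle'' agents, write out the cycle, verify best responses, check closure --- but you exhibit no graph, no initial coalition structure, no sequence of moves, and no utility computations. The entire mathematical content of the theorem is the existence of such a graph and such a cycle, and that existence is precisely what remains unestablished. You even identify the ``finite but delicate case analysis'' as the bulk of the effort, and that is the part that is missing; an existence theorem proved by construction is not proved until the construction is given and checked.

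For comparison, the paper's proof supplies a concrete $8$-node graph: a hub $i_6$ adjacent to $i_1,\dots,i_5$ and to $i_8$, with an additional pendant edge from $i_1$ to $i_7$. Starting from the partition $\{\{i_1,\dots,i_7\},\{i_8\}\}$, the move sequence $i_6,i_1,i_7,i_2,i_3,i_4,i_6,i_1,i_7,i_4,i_3,i_2$ (each move being the agent's unique best improving move) returns to the starting partition, which completes the argument. To turn your proposal into a proof you would need to produce a graph and cycle of comparable concreteness and perform the per-step verification you outline; as written, the proposal is a plan for a proof rather than a proof.
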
 
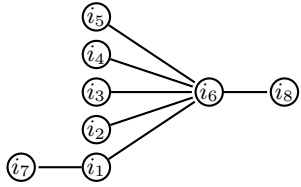
\begin{figure}
\centering
 \scalebox{1}{  \begin{tikzpicture}[font=\footnotesize,-,>=stealth',shorten >=1pt,auto,node distance=3cm,
  thick,main node/.style={circle,draw}]
  \tikzset{edge/.style = {-,> = latex'}}

\node (1) at (0, 0) {$i_1$};
  \node[main node] (1) at (0,0) {};
\node (2) at (0, 0.5) {$i_2$};
  \node[main node] (2) at (0,0.5) {};
\node (3) at (0, 1) {$i_3$};
  \node[main node] (3) at (0,1) {};
\node (4) at (0, 1.5) {$i_4$};
  \node[main node] (4) at (0,1.5) {};
\node (5) at (0, 2) {$i_5$};
  \node[main node] (5) at (0,2) {};

\node (6) at (1.5, 1) {$i_6$};
  \node[main node] (6) at (1.5,1) {};
\node (7) at (2.5, 1) {$i_8$};
  \node[main node] (7) at (2.5,1) {};
  
\node (8) at (-1, 0) {$i_7$};
  \node[main node] (8) at (-1,0) {};
  
\draw[edge] (1) to (8);
\draw[edge] (6) to (7);
\draw[edge] (1) to (6);
\draw[edge] (2) to (6);
\draw[edge] (3) to (6);
\draw[edge] (4) to (6);
\draw[edge] (5) to (6);
  \end{tikzpicture}
}
  \caption{The graph $G$.}\label{fig:dynamic}
\end{figure}
\begin{proof}
Let us consider the game induced by be the graph $G$ depicted in Figure~\ref{fig:dynamic}. Let us analyze the dynamics that starts from the coalitions structure $\outcome=\{\{i_1,\dots,i_7\}, \{i_8\}\}$, where agents $\{i_1,\dots,i_7\}$ are together in a coalition, and agent $i_8$ is alone in another one. It is not difficult to check that, if the agents perform their unique (best) improving moves in the following exact ordering $i_6,i_1,i_7,i_2,i_3,i_4,i_6,i_1,i_7,i_4,i_3,i_2$, we get back to the starting coalitions structure $\outcome$.    
\end{proof}

Despite the above negative results, it is easy to see that, if a graph $G$ does not contain negative edge-weights, then the game induced by $G$ admits a Nash equilibrium, that is the outcome where all the agents are in the same coalition. Therefore, in the next subsections we characterize the efficiency of Nash stable outcomes in modified fractional hedonic games played on
general graphs with non-negative edge-weights.

%\subsection{Price of Anarchy and Price of Stability}
%
%We can define now a function $\SW:\outcome\rightarrow \mathbb {R} $ that measures the efficiency of each outcome, we refer to it as \emph{social welfare} function. 
%In particular, this function is $$\SW(\outcome)=\sum_{i\in V} u_i(\outcome)$$
%
%Given a game and a social-choice function $\SW$, let $S$ be the set of Nash equilibria, and let OPT be the outcome of the game optimizing $\SW$. Then, the Price of Anarchy (PoA) is:
%$$ PoA={\sup _{s\in S }\frac {\SW(OPT)}{\SW(s)}}$$
%A related notion is that of the Price of Stability (PoS):
%$$ PoS={\inf _{s\in S }\frac {\SW(OPT)}{\SW(s)}}$$

By definition, we have that $1\leq PoS\leq PoA$.

%Our main results are reported in Table~\ref{tab:results}
%\begin{table}[h]
%\centering
%\begin{tabular}{c|c|c|c|c}
%& \multicolumn{2}{c|}{PoA} & \multicolumn{2}{c}{PoS}\\
%\hline
%& $W(\outcome)$ & $\overline{W}(\outcome)$ & $W(\outcome)$ & $\overline{W}(\outcome)$\\
%\hline
%$w_{ij}=k \in \mathbb{R^+}$& \multicolumn{2}{c|}{$\Theta \bigl( n(1+\alpha)\bigr)$} &\multicolumn{2}{c}{1}\\
%\hline
%$w_{ij}=k \in \mathbb{R}$ & \multicolumn{2}{c|}{Unbounded}& $e\in E: W(\outcome)<0$  &?\\
%\hline
%% & &$\exists e\in E: \overline{W}(\outcome)<0, OPT>0$& $e\in E: W(\outcome)<0$ & \\
%%\hline
%\end{tabular}
%  \caption{Results summary.}\label{tab:results}
%\end{table}
\subsection{Price of Anarchy}

We first show that the price of anarchy grows linearly with the number of agents, even for the special case of unweighted paths.

\begin{theorem}\label{thm:lb_poa_Nash_unweighted_graphs}
There exists an unweighted path $G$ such that $\poa(\ga(G)) = \Omega(n)$.
\end{theorem}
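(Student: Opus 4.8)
The plan is to exhibit a single path and use the \emph{grand coalition} (all agents in one coalition) as a pathological Nash equilibrium whose social welfare stays constant while the social optimum grows linearly. Concretely, I would let $G$ be the path $P_n$ on nodes $1,2,\dots,n$ with edges $(t,t+1)$ for $t=1,\dots,n-1$, taking $n$ even for cleanliness (odd $n$ only changes the optimum by an additive constant and is irrelevant for an $\Omega(n)$ bound).

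First I would bound the optimum from below. By Theorem~\ref{thm:basic_coalitions} (together with Theorem~\ref{thm:unweightedSocialOpt}) an optimal outcome may be assumed to consist only of basic coalitions; since a path is triangle-free, no coalition isomorphic to $K_3$ can occur, so the optimum reduces to selecting a maximum matching and putting each matched pair in its own $K_2$. The perfect matching $\{1,2\},\{3,4\},\dots,\{n-1,n\}$ yields $n/2$ coalitions, each contributing social welfare $\frac{2\cdot 1}{2-1}=2$, hence $\SW(\outcome^*)=n$.

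Next I would verify that the grand coalition $\outcome=\{N\}$ is Nash stable. Each agent $i$ has utility $u_i(\outcome)=\frac{\deg_i}{n-1}>0$, where $\deg_i\in\{1,2\}$ on the path. The key observation is that $\outcome$ has a single non-empty coalition, so any unilateral deviation of $i$ necessarily moves her into an empty candidate coalition, where she becomes a singleton with utility $0<u_i(\outcome)$. Thus no agent has an improving move and $\outcome\in\pne(\ga(G))$. Its social welfare is $\SW(\outcome)=\frac{2|E|}{n-1}=\frac{2(n-1)}{n-1}=2$.

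Combining the two estimates gives $\poa(\ga(G))\geq \frac{\SW(\outcome^*)}{\SW(\outcome)}=\frac{n}{2}=\Omega(n)$, as required. There is essentially no hard step here: the only point needing care is the Nash stability of the grand coalition, and this is immediate, since a lone deviator can only reach an empty coalition and drop to utility $0$. In particular, the argument crucially exploits that the deviating agent cannot drag a neighbor along (that would be a joint, strong-Nash deviation rather than a unilateral move), which is exactly why the bound is specific to plain Nash stability and does not contradict the constant strong price of anarchy proved earlier.
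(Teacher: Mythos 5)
Your proposal is correct and follows essentially the same route as the paper's proof: take an unweighted path with $n$ even, observe that the perfect matching gives $\SW(\outcome^*)=n$, and that the grand coalition is Nash stable with social welfare $\frac{2(n-1)}{n-1}=2$, yielding a ratio of $n/2$. The extra detail you supply (why a unilateral deviator can only land in an empty coalition, and why this does not conflict with the constant strong price of anarchy) is a correct elaboration of what the paper leaves implicit.
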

%%%%%%%%%%%%%%%%%%%%%%%%%%%%%%%%%%%%%%%%%%%%%%%%%%%%%%%%%%%%%%%%%%%%%%%%

\begin{proof}
Let $G$ be an unweighted simple path with an even number $n$ of nodes. Notice that, since in this setting the utility of an agent in any outcome is at most $1$, the optimum solution is given by a perfect matching, that is,  $\SW(OPT) = n$. However, when all the nodes are in the same coalition, we obtain a Nash stable outcome $\outcome$ such that $\SW(\outcome)= \frac{2*(n-2)+2}{n-1}$. Hence, the claim follows.
\end{proof}

%%%%%%%%%%%%%%%%%%%%%%%%%%%%%%%%%%%%%%%%%%%%%%%%%%%%%%%%%%%%%%%%%%%%%%%%

We are able to show an asymptotically matching upper bound, holding for weighted (positive) graphs.
\begin{theorem}\label{thm:ub_poa_Nash_unweighted_graphs}
For any weighted graph with non-negative edge-weights $G$, $\poa(\ga(G))\leq n-1$.
\end{theorem}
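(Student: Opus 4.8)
The plan is to prove the upper bound $\poa(\ga(G)) \leq n-1$ by comparing, coalition by coalition, the social welfare of the optimum against the total utility guaranteed by the Nash stability condition. The key structural fact I would exploit is that, in a Nash stable outcome $\outcome$, every agent $i$ achieves utility at least as large as what she could obtain by deviating to be alone (utility $0$) or by joining any coalition, and in particular she achieves at least $\frac{\delta^i_{max}}{?}$ in a controlled way. More precisely, I would first establish the lower bound side: since $\outcome$ is Nash stable, no agent wants to leave her coalition to form a singleton, so $u_i(\outcome) \geq 0$; more usefully, any agent $i$ with at least one positive-weight incident edge can guarantee herself a strictly positive utility, which forces a nontrivial lower bound on $\SW(\outcome)$.

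\emph{The core of the argument} would be to bound $\SW(\outcome^*)$ from above and $\SW(\outcome)$ from below in matching terms. For the optimum, recall from the preliminaries that $\SW(\outcome^*) = \sum_j \frac{2W(E_{C^*_j})}{|C^*_j|-1}$, and since each agent's utility satisfies $u_i \leq \delta^i_{max}$, we have $\SW(\outcome^*) \leq \sum_{i \in N} \delta^i_{max}$. For the Nash equilibrium, the crucial step is to show that each agent $i$ who has any neighbor gets utility at least $\frac{\delta^i_{max}}{n-1}$ in $\outcome$: if agent $i$ deviated to join the coalition of her best neighbor $j^*$ (the one realizing $\delta^i_{max}$), her utility in that coalition would be at least $\frac{w_{i,j^*}}{(\text{size after joining}) - 1} \geq \frac{\delta^i_{max}}{n-1}$, since any coalition has at most $n$ members. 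Nash stability then guarantees $u_i(\outcome) \geq \frac{\delta^i_{max}}{n-1}$. Summing over all agents gives $\SW(\outcome) \geq \frac{1}{n-1}\sum_{i \in N}\delta^i_{max}$, and dividing yields $\frac{\SW(\outcome^*)}{\SW(\outcome)} \leq n-1$.

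\emph{The main obstacle} I anticipate is the handling of the deviation-size bound and the agents with no positive utility contribution. When agent $i$ deviates to join $\outcome(j^*)$, the new coalition has size $|\outcome(j^*)|+1 \leq n$, so the denominator $|\outcome(j^*)|$ is at most $n-1$; I must make sure this bound is taken in the post-deviation outcome $(\outcome, i, j^*)$ and that I correctly account for the case where $i$ is already in the same coalition as $j^*$ (in which case the inequality $u_i(\outcome) \geq \frac{\delta^i_{max}}{n-1}$ must be argued directly rather than via the deviation). I would also need to verify that agents with $\delta^i_{max}=0$ contribute nothing to either side and can be safely ignored, and that the summation step correctly matches the $\sum_i \delta^i_{max}$ appearing in both bounds so that the ratio telescopes cleanly to $n-1$. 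Care with the edge case where $\outcome$ consists of singletons (giving $\SW(\outcome)=0$) is needed, but this cannot arise when any positive edge exists, since such an agent would deviate — so Nash stability precludes the degenerate denominator.
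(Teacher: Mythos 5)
Your proposal is correct and follows essentially the same approach as the paper: lower-bounding each agent's equilibrium utility by $\frac{\delta^i_{max}}{n-1}$ via the deviation to her best neighbor's coalition, upper-bounding her optimal utility by $\delta^i_{max}$, and summing over all agents. The extra edge cases you flag (agent already with her best neighbor, isolated agents, the all-singleton degenerate case) are handled correctly and only make the argument more careful than the paper's.
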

%%%%%%%%%%%%%%%%%%%%%%%%%%%%%%%%%%%%%%%%%%%%%%%%%%%%%%%%%%%%%%%%%%%%%%%%

\begin{proof}
We notice that in any Nash equilibrium $\outcome$, any agent $i$ has utility $u_i(\outcome) \geq \frac{\delta^{i}_{max}}{n-1}$, since agent $i$ can always join the coalition containing the agent $j$, where $j=\text{arg}\max_{z \in N} w_{i,z}$. On the other hand, in the optimal outcome $OPT$, we have that any agent $i$ has utility such that $u_i(OPT)\leq \frac{(|OPT(i)|-1)*\delta^{i}_{max}}{|OPT(i)|-1} = \delta^{i}_{max}$. Hence, by summing over all agents, the theorem follows. 
\end{proof} 

%%%%%%%%%%%%%%%%%%%%%%%%%%%%%%%%%%%%%%%%%%%%%%%%%%%%%%%%%%%%%%%%%%%%%%%%

\subsection{Price of Stability}
On the one hand, since we have proved in Corollary \ref{cor:spos1} that, for the setting of unweighted graphs, the strong price of stability is $1$, it directly follows that also the price of stability is $1$ in this setting, because any strong Nash equilibrium is also a Nash equilibrium.

On the other hand, in the weighted case, given the upper bound to the price of anarchy provided in Theorem \ref{thm:ub_poa_Nash_unweighted_graphs}, the following theorem shows an asymptotically matching lower bound to the price of stability. 
%This result complements the one presented in \cite{KKP16}, where authors show that the price of stability is $1$ (without considering complexity issues). In particular, with respect to the result in \cite{KKP16}, the different characterization of the optimum done in Thorem \ref{thm:basic_coalitions} permits us to first compute an outcome that maximizes the social welfare and then to transform this optimal outcome into a strong Nash without worsening its social welfare.

\begin{theorem}
There exists a weighted star $G$ with non-negative edge weights such that $\pos(\ga(G))= \Omega(n)$.
\end{theorem}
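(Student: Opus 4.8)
The plan is to exhibit a weighted star in which the only Nash stable outcome is the grand coalition, while the social optimum concentrates almost all the weight into a single coalition of size $2$. Concretely, let $G$ be a star on $n$ nodes centred at $c$, with one distinguished leaf $z$ such that $w_{c,z}=1$ and $n-2$ further leaves each joined to $c$ by an edge of small weight $\epsilon>0$ (say $\epsilon\le 1/n^2$); this is essentially the instance of Figure~\ref{fig:star}. The point is that the center has a single dominant incident edge, yet its total incident weight $\delta^c=1+(n-2)\epsilon$ stays arbitrarily close to $1$.

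First I would pin down the optimum. Since the leaves are pairwise non-adjacent, every coalition that contributes to the social welfare must contain $c$, and for a coalition consisting of $c$ together with a nonempty leaf set $L$ one has $\SW=\frac{2\sum_{\ell\in L}w_{c,\ell}}{|L|}$, i.e.\ twice the average incident weight over $L$. As $c$ lies in exactly one coalition, $\SW(\outcome)$ equals this quantity, which is maximised by $L=\{z\}$, yielding $\SW(\outcome^*)=2$ (center and $z$ alone, all remaining leaves as singletons).

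The crucial step is to show that the grand coalition is the \emph{unique} Nash stable outcome, so that the best (and only) equilibrium is forced to be inefficient. A leaf has positive utility exactly when it shares its coalition with $c$; hence if $c$ sits in a coalition with leaf set $L$ and some leaf $\ell\notin L$, then $\ell$, currently at utility $0$, strictly improves by joining $c$'s coalition to attain utility $w_{c,\ell}/(|L|+1)>0$, so that configuration is not Nash stable. Therefore in any equilibrium every leaf must lie with $c$, i.e.\ the unique candidate is the grand coalition $N$; one then checks directly that $N$ is indeed stable, since no leaf gains by leaving (its utility would drop to $0$) and $c$ cannot improve on staying, all leaves being already present.

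Finally I would compute the ratio. The grand coalition gives $\SW(N)=\frac{2\delta^c}{n-1}=\frac{2\left(1+(n-2)\epsilon\right)}{n-1}$, whence
\[
\pos(\ga(G))=\frac{\SW(\outcome^*)}{\SW(N)}=\frac{n-1}{1+(n-2)\epsilon},
\]
which approaches $n-1$ as $\epsilon\to 0$ and is $\Omega(n)$ already for $\epsilon\le 1/n^2$. Everything but the equilibrium characterisation is direct computation, and I expect the only real obstacle to be making that characterisation airtight: because the price of stability is a best-case measure, it is essential to rule out \emph{every} Nash stable outcome other than the grand coalition, rather than merely exhibiting one inefficient equilibrium. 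Notably, this matches the upper bound $\poa(\ga(G))\le n-1$ of Theorem~\ref{thm:ub_poa_Nash_unweighted_graphs}, so the construction is asymptotically tight.
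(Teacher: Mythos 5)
Your construction and argument coincide with the paper's own proof: the same weighted star with one unit-weight leaf and $n-2$ leaves of weight $\epsilon$, the same observation that every leaf must join the center in any equilibrium so the grand coalition is the unique Nash stable outcome, and the same comparison against the coalition $\{c,z\}$ of value $2$. The proposal is correct and essentially identical to the paper's proof (it even adds the small extra of verifying that $\{c,z\}$ is exactly optimal, which the lower bound does not require).
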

%%%%%%%%%%%%%%%%%%%%%%%%%%%%%%%%%%%%%%%%%%%%%%%%%%%%%%%%%%%%%%%%%%%%%%%%

\begin{figure}[h]
\centering
 \scalebox{1}{\begin{tikzpicture}[font=\footnotesize,-,>=stealth',shorten >=1pt,auto,node distance=3cm, thick,main node/.style={circle,draw}]
\tikzset{edge/.style = {-,> = latex'}}

\def \radius {1.2cm}

\node (0) at (0, 0) {$i$};
  \node[main node] (0) at (0,0) {};

\node (1) at ({360/7 *(1 - 1)}:\radius) {$z$};
 \node[main node](1) at ({360/7 * (1 - 1)}:\radius) {};
\draw[edge] (0)  to node[sloped, anchor=center, above]  {$1$} (1);

\node (2) at ({360/7 * (2 - 1)}:\radius) {$j$};
 \node[main node](2) at ({360/7 * (2 - 1)}:\radius) {};
\draw[edge] (0)  to node[sloped, anchor=center, above]  {$\epsilon$} (2);

\node (3) at ({360/7 * (3 - 1)}:\radius) {};
 \node[main node](3) at ({360/7 * (3 - 1)}:\radius) {};
\draw[edge] (0)  to node[sloped, anchor=center, above]  {$\epsilon$} (3);

\node (4) at ({360/7 * (4 - 1)}:\radius) {};
 \node[main node](4) at ({360/7 * (4 - 1)}:\radius) {};
\draw[edge] (0)  to node[sloped, anchor=center, above]  {$\epsilon$} (4);

\node (5) at ({360/7 * (5 - 1)}:\radius) {};
 \node[main node](5) at ({360/7 * (5 - 1)}:\radius) {};
\draw[edge] (0)  to node[sloped, anchor=center, above]  {$\epsilon$} (5);

\node (6) at ({360/7 * (6 - 1)}:\radius) {};
 \node[main node](6) at ({360/7 * (6 - 1)}:\radius) {};
\draw[edge] (0)  to node[sloped, anchor=center, above]  {$\epsilon$} (6);

\node (7) at ({360/7 * (7 - 1)}:\radius) {};
 \node[main node](7) at ({360/7 * (7 - 1)}:\radius) {};
\draw[edge] (0)  to node[sloped, anchor=center, above]  {$\epsilon$} (7);
\end{tikzpicture}
}
  \caption{The star graph $G$.}\label{fig:star_2}
\end{figure}
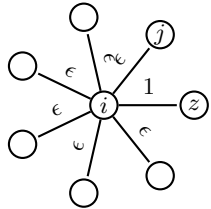
\begin{proof}
Let $G$ be a star with $n$ nodes centred in $i$ as depicted in Figure~\ref{fig:star_2}. The weights of the edges are such that there exists a leaf node $z$ such that $w_{i,z}=1$, while for all the other leaf nodes $j \neq z,i$ we have $w_{i,j}=\epsilon$, for an arbitrarily small $\epsilon>0$, (for instance $0 < \epsilon < \frac{1}{n}$). 

Notice that the grand coalition (i.e., the outcome in which all agents belong to the same coalition) is the unique Nash stable outcome and has social welfare equal to $\frac{2+2(n-2)\epsilon}{n-1}$. 
In fact, in any Nash equilibrium, all the leafs must be in the coalition together with the center $i$. 
On the other hand, the coalition containing only agents $i$ and $z$ yields a social value of $2$, and thus the theorem follows.
\end{proof}

%%%%%%%%%%%%%%%%%%%%%%%%%%%%%%%%%%%%%%%%%%%%%%%%%%%%%%%%%%%%%%%%%%%%%%%%

\section{Core stable outcomes}\label{sec:core_equilibria}

We first show that the strict core of $\ga(G)$ could be empty, even if $G$ is unweighted.

\begin{theorem}\label{thm:strictcore_empty}
There exists an unweighted graph $G$ such that $\score(\ga(G))=\emptyset$.
\end{theorem}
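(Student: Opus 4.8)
The plan is to exhibit a single small unweighted graph and to verify by an exhaustive case analysis that \emph{every} partition of its agents is weakly blocked, so that no outcome survives in the strict core. I would take $G$ to be the path on three vertices, say $a-b-c$ with edges $(a,b)$ and $(b,c)$; this is the smallest sensible candidate and, as I will argue, already suffices. Recall that an outcome lies in $\score(\ga(G))$ iff no coalition $T$ \emph{weakly} blocks it, i.e.\ iff there is no $T$ with $u_i(T)\geq u_i(\outcome)$ for all $i\in T$ and $u_i(T)>u_i(\outcome)$ for at least one $i\in T$, where $u_i(T)$ denotes the utility of $i$ when its coalition is exactly $T$.

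First I would record the utility of each agent in every coalition it can inhabit. Since $G$ is unweighted, an agent's utility in a coalition $C$ equals its number of neighbours in $C$ divided by $|C|-1$. Thus every singleton yields utility $0$; the pairs $\{a,b\}$ and $\{b,c\}$ give both their members utility $1$; the pair $\{a,c\}$ gives utility $0$ (the endpoints are non-adjacent); and the grand coalition $\{a,b,c\}$ gives $u_b=1$ and $u_a=u_c=\frac12$.

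Next I would enumerate the five partitions of $\{a,b,c\}$ and, for each, name a pair that weakly blocks it. The driving observation is that the central vertex $b$ attains its maximum utility $1$ in each of the pairs $\{a,b\}$ and $\{b,c\}$, so $b$ can always be persuaded to join whichever endpoint is currently underpaid without losing anything. Concretely: the grand coalition is weakly blocked by $\{a,b\}$ (here $u_a$ rises from $\frac12$ to $1$ while $u_b$ stays $1$); the partition $\{\{a,b\},\{c\}\}$ is weakly blocked by $\{b,c\}$ (where $u_c$ rises from $0$ to $1$ and $u_b$ stays $1$); symmetrically $\{\{b,c\},\{a\}\}$ is weakly blocked by $\{a,b\}$; and the two remaining partitions, $\{\{a,c\},\{b\}\}$ and the all-singleton partition, are even strongly blocked by $\{a,b\}$. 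Since every outcome admits a weakly blocking coalition, $\score(\ga(G))=\emptyset$.

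The only delicate point -- and the single idea that keeps this from being vacuous -- is the use of the central agent's \emph{indifference}: in the two genuinely non-trivial cases the deviating pair does not strictly improve $b$, so these coalitions are weakly but not strongly blocking. This is exactly why the same instance still has a nonempty core (for example $\{\{a,b\},\{c\}\}$ is core-stable, since one checks directly that no coalition strictly improves all of its members), consistent with the earlier claim that an MFHG always has a nonempty core; the content here is precisely to separate the strict core from the core. I expect no real obstacle beyond being careful with the singleton convention $u_i=0$ and confirming that the five-case list is genuinely exhaustive.
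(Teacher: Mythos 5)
Your proof is correct and follows essentially the same route as the paper: the three-vertex path, an exhaustive enumeration of its partitions, and the observation that the centre's indifference yields weakly (but not strongly) blocking pairs in the non-trivial cases. In fact your case analysis is slightly more complete, since you explicitly handle the partition $\{\{a,c\},\{b\}\}$, which the paper's enumeration omits.
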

\begin{proof}
Let $G$ be a path with $n=3$ nodes $\{i, j, k\}$. 

If $\outcome=\{\{x_1\},\{ x_2\}, \{x_3\}\}$, $C=\{x_1,x_2\}$ is a blocking coalition. In fact, moving from their coalition in $\outcome$ to coalition $C$, both $x_2$ and $x_3$ increase their utility form $0$ to $1$.

If $\outcome=\{\{x_1, x_2\}, \{x_3\}\}$, $C=\{x_2,x_3\}$ is a weakly blocking coalition. In fact, moving from their coalition in $\outcome$ to coalition $C$, $x_3$ increases her utility form $0$ to $1$ and $x_2$ does not change her utility. The case $\outcome=\{\{ x_1\}, \{x_2, x_3\}\}$ is symmetric.

Finally, if $\outcome=\{\{x_1, x_2, x_3\}\}$, $C=\{x_1,x_2\}$ is a weakly blocking coalition. In fact, moving from their coalition in $\outcome$ to coalition $C$, $x_1$ increases her utility form $\frac{1}{2}$ to $1$ and $x_2$ does not change her utility.

Since all possibilities for $\outcome$ have been considered, it follows that a strict core stable coalition does not exist.  
\end{proof}

Given the negative result of Theorem \ref{thm:strictcore_empty} concerning the strict core of modified fractional hedonic games, in the following we focus on the core of this games.

In this section we consider the core of MFHGs. 
We first show that for any graph $G$, the core of the game $\ga(G)$ in not empty, and that a core stable outcome approximating the optimal social welfare by a factor of $2$ can be computed in polynomial time.

\begin{theorem}\label{thm:2-apx_weighted_K2}
Given any graph $G=(N,E,w)$, there exists a polynomial time algorithm for computing a core stable coalition structure $\outcome$ such that $\SW(\outcome) \geq \frac{1}{2} \SW(\outcome^*(\ga(G)))$ and all coalitions in $\outcome$ are of cardinality at most $2$.
\end{theorem}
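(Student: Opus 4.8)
The plan is to take $\outcome$ to be the outcome induced by a greedy, heaviest-edge-first matching computed on the non-negatively weighted part of $G$. Concretely, let $G'=(N,E',w)$ be obtained from $G$ by discarding every edge of negative weight, and let $M=\{(i_1,j_1),\dots,(i_p,j_p)\}$ be built by repeatedly selecting a heaviest still-available edge of $G'$ and deleting its two endpoints, so that $w_{i_1,j_1}\ge\cdots\ge w_{i_p,j_p}\ge 0$. I set $\outcome$ to be the partition whose non-singleton coalitions are exactly the matched pairs $\{i_h,j_h\}$, all remaining agents being singletons. By construction every coalition has cardinality at most $2$, the whole procedure runs in polynomial time (sort the edges once), and $M$ is a maximal matching of $G'$, so the unmatched nodes form an independent set of $G'$, i.e. any edge of $G$ joining two of them is negative. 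Since each non-singleton coalition uses a non-negative edge, $\SW(\outcome)=2W(M)$, and this quantity is the same whether measured in $G$ or in $G'$.

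For core stability I would argue, by induction on the greedy order, that no matched node can lie in a strongly blocking coalition. In the base case $(i_1,j_1)$ carries the globally heaviest weight, so $u_{i_1}(\outcome)=w_{i_1,j_1}=\delta^{i_1}_{max}$; since $u_i(T)\le\delta^{i}_{max}$ holds for every coalition $T$ (as already remarked in the preliminaries), neither $i_1$ nor $j_1$ can strictly improve. In the inductive step, a strongly blocking coalition containing $i_{t+1}$ must, by the hypothesis, avoid $i_1,j_1,\dots,i_t,j_t$ and hence be contained in the residual node set $N_t$; but $w_{i_{t+1},j_{t+1}}$ is the heaviest $G'$-edge inside $N_t$, so $i_{t+1}$ cannot exceed $u_{i_{t+1}}(\outcome)=w_{i_{t+1},j_{t+1}}$ there, a contradiction. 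It follows that any strongly blocking coalition would consist only of unmatched nodes; as these form an independent set of $G'$, every member would have non-positive utility in it against utility $0$ in $\outcome$, so no such coalition can strictly improve all of its members. Hence $\outcome\in\core(\ga(G))$, even with negative weights.

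It remains to prove $\SW(\outcome)\ge\frac12\SW(\outcome^*)$, equivalently $\SW(\outcome^*)\le 4W(M)$. Dropping negative edges can only raise the social welfare of any fixed partition while leaving $\SW(\outcome)$ unchanged, so I may assume all weights are non-negative. The step I expect to be the main obstacle is that the naive route — bounding $\SW(\outcome^*)$ by twice a maximum integral matching — is false (a unit-weight triangle has $\SW(\outcome^*)=3$ but maximum matching weight $1$), because the averaging by $|C|-1$ forces a fractional relaxation. I would observe that in any coalition $C$ of $\outcome^*$ the uniform assignment $x_e=\frac{1}{|C|-1}$ on the internal edges is a feasible fractional matching, each vertex receiving load $\deg_C(v)/(|C|-1)\le 1$; summing these disjoint assignments over all coalitions gives a global feasible fractional matching of weight $\sum_C\frac{W(E_C)}{|C|-1}=\frac12\SW(\outcome^*)$, whence $\SW(\outcome^*)\le 2\nu_f(G)$, where $\nu_f(G)$ denotes the maximum weight of a fractional matching of $G$.

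Finally I would show that the greedy matching $2$-approximates even this fractional optimum, i.e. $\nu_f(G)\le 2W(M)$. As $M$ is maximal, every edge $e$ of $G'$ has an endpoint covered by $M$; I charge the fractional contribution of $e$ to the matching edge $g(e)\in M$ that first deletes such an endpoint, which by the heaviest-first rule satisfies $w_{g(e)}\ge w_e$. A matching edge $g=(i_t,j_t)$ then absorbs total fractional load at most $\sum_{e\ni i_t}x_e+\sum_{e\ni j_t}x_e\le 2$, so the weight charged to it is at most $2w_g$; summing over $M$ yields $\nu_f(G)\le 2W(M)$ (equivalently, one can skip $\nu_f$ and charge $\SW(\outcome^*)$ edge-by-edge directly onto $M$). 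Combining the two bounds gives $\SW(\outcome^*)\le 2\nu_f(G)\le 4W(M)=2\SW(\outcome)$, the claimed factor-$2$ guarantee; it is tight for this algorithm, as seen on the path $a\!-\!b\!-\!c\!-\!d$ with weights $1,1+\epsilon,1$, where greedy matches only the middle edge.
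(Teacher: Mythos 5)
Your proof is correct, and it uses the same algorithm (greedy heaviest-edge-first matching on the non-negative subgraph) and essentially the same inductive core-stability argument as the paper, including the observation that unmatched agents form an independent set of $G'$ and hence cannot block. Where you genuinely diverge is in the approximation analysis. The paper charges the edges of each optimal coalition $C^*_i$ directly to the greedy phases: the edges of $E_{C^*_i}$ deleted at phase $t$ each have weight at most $w_{e^{t-1}}$ and number at most $f(e^{t-1},i)(|C^*_i|-1)$, so the denominator $|C^*_i|-1$ cancels and summing $\sum_i f(e^{t-1},i)=2$ yields $\SW(\outcome^*)\le\sum_t 4w_{e^{t-1}}=2\,\SW(\outcome)$ in one pass. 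You instead factor the bound through the fractional matching polytope: the uniform weights $x_e=\frac{1}{|C|-1}$ inside each optimal coalition form a feasible fractional matching of value $\frac12\SW(\outcome^*)$, giving $\SW(\outcome^*)\le 2\nu_f(G)$, and a standard charging argument shows the greedy matching satisfies $\nu_f(G)\le 2W(M)$. The two arguments are doing the same accounting (your per-vertex load bound $\deg_C(v)/(|C|-1)\le 1$ is exactly the paper's ``at most $|C^*_i|-1$ adjacent edges per endpoint''), but your modular version is arguably cleaner, correctly flags why a comparison with the \emph{integral} maximum matching would fail (the unit triangle), and isolates a reusable intermediate quantity $\nu_f(G)$; the paper's version is more self-contained and avoids invoking LP relaxations. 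Your tightness example coincides with the paper's lower-bound instance for the core price of stability.
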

\begin{proof}
Consider the following algorithm, working in phases $t=1,2,\dots$.
Let $G^0=(N,E^0,w)$ be the subgraph of $G$ such that $E^0=\{e \in E: w(e)\geq 0\}$, that is, $G^0$ has the same vertices as $G$ and only contains the edges of $G$ of non-negative weight. For any $t\geq1$, let $G^t=(N^t, E^t, w)$ be the graph obtained after phase $t$. In any phase $t \geq 1$, a new coalition isomorphic to $K_2$ is added to $\outcome$ as follows:
Let $e^{t-1}=\{i,j\}$ be an edge in $E^{t-1}$ of maximum weight $w_{i,j}=\max_{e \in E^{t-1}}w_e$.
We add to $\outcome$ the coalition formed by $i$ and $j$, i.e., $\outcome = \outcome \cup \{i,j\}$.  
Moreover, let $G^t$ such that $N^t = N^{t-1} \setminus \{i,j\}$ and $E^t \subset E^{t-1}$ the subset of edges of $G^0$ induced by nodes $N^t$.

When $E^t=\emptyset$, the algorithm ends returning $\outcome \cup \{ \{i\}|i \in N^t\}$.
Since at each phase at least an edge is removed from the graph, the algorithms terminates in at most $|E|$ phases returning an outcome with all coalitions of cardinality at most $2$.

We first show that $\outcome$ is a core stable outcome of $\ga(G)$. Remind that, for any possible outcome, $u_i(\outcome) \leq \delta^{i}_{max}$. Therefore, in the outcome $\outcome$, agents $i$ and $j$ selected at phase $t=1$ are achieving the maximum utility they can hope. It implies that such agents cannot belong to any strongly block coalition. The proof continues by induction as follows. Suppose that all the agents selected until phase $q$, i.e., agents belonging to $N \setminus N^q$, cannot belong to any strongly block coalition, then agents $i_{q+1}$ and $j_{q+1}$ selected in the phase $q+1$ cannot belong to any strongly block coalition as well. In fact, suppose that such agents have a certain utility $x$ in the coalition $\outcome$. For the inductive hypothesis we have that they can create a strongly block coalition only with agents belonging to $N^{q+1}$. However, since the edge $(i_{q+1},j_{q+1})$ has the maximum weights in $G^{q+1}$, if implies that they cannot get utility greater than $x$. Finally, for the agents that are not matched, i.e., agents that are alone in a coalition, since they form and independent set, they cannot form a strongly block coalition, and this finishes the proof.

It remains to show that $\SW(\outcome) \geq \frac{1}{2} \SW(\outcome^*(\ga(G)))$.
First of all notice that in any phase $t$, a coalition contributing $2 w_{e^{t-1}}$ to the social welfare is added to $\outcome$; we thus obtain that
$$ \SW(\outcome) = \sum_{t \geq 1}  2 w_{e^{t-1}}.$$ 

For any $e \in E$, let $f(e,i) \in \{0,1,2\}$ be the number of endpoints of $e$ belonging to coalition $C^*_i$. 
It is possible to bound $\SW(\outcome^*(\ga(G)))$ as follows:
\begin{eqnarray}
\SW(\outcome^*)& = &
\sum_{C^*_i \in \outcome^*} \SW(C^*_i)\nonumber\\
& = &
\sum_{C^*_i \in \outcome^*}
 \sum_{t \geq 1}
   \sum_{e \in E_{C^*_i} \cap (E^{t} \setminus E^{t-1})}
      \frac{2 w_e}
           {|C^*_i-1|}\nonumber\\
& \leq &
\sum_{C^*_i \in \outcome^*}
 \sum_{t \geq 1}
      \frac{2 f(e^{t-1},i) w_{e^{t-1}} (|C^*_i-1|)}
           {|C^*_i-1|}\label{passaggio1}\\
& = &
\sum_{t \geq 1}
 \sum_{C^*_i \in \outcome^*}
      2 f(e^{t-1},i) w_{e^{t-1}}\nonumber\\           
& = &
\sum_{t \geq 1}
   4 w_{e^{t-1}},\label{passaggio2}
\end{eqnarray} 
where inequality \ref{passaggio1} holds because $w_{e^{t-1}}=\max_{e \in E^{t-1}}w_e$ and every endpoint of $e^{t-1}$ belonging to $C^*_i$ can have at most $|C^*_i-1|$ adjacent edges (notice that all edges in $E^{t} \setminus E^{t-1}$ are adjacent to an endpoint of $e^{t-1}$), and equality \ref{passaggio2} holds because, given that $C^*_1,\ldots,C^*_n$ are a partition of $N$, it follows by definition of $f$ that $\sum_{C^*_i \in \outcome^*}f(e^{t-1},i)=2$. 
Therefore,
$$
\frac{\SW(\outcome^*(\ga(G)))}{\SW(\outcome)} \leq 
\frac{\sum_{t \geq 1}
   4 w_{e^{t-1}}}
{\sum_{t\geq 1} 2 w_{e^{t-1}}} = 2.$$ 
\end{proof} 

As a direct consequence of Theorem \ref{thm:2-apx_weighted_K2}, the following corollary holds.

\begin{corollary}
For any graph $G$, $\cpos(\ga(G)) \leq 2$.
\end{corollary}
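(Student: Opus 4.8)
The plan is to use the core stable outcome produced by Theorem~\ref{thm:2-apx_weighted_K2} as an explicit witness. Recall that $\cpos(\ga(G))$ is defined as a \emph{minimum} over all core stable outcomes, i.e., $\cpos(\ga(G)) = \min_{\outcome\in\core(\ga(G))} \frac{\SW(\outcome^*(\ga(G)))}{\SW(\outcome)}$; hence it suffices to exhibit a single core stable outcome whose social welfare is at least half of the optimum.

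First I would invoke Theorem~\ref{thm:2-apx_weighted_K2}, which guarantees a coalition structure $\outcome \in \core(\ga(G))$ satisfying $\SW(\outcome) \geq \frac{1}{2}\SW(\outcome^*(\ga(G)))$. Rearranging this inequality yields $\frac{\SW(\outcome^*(\ga(G)))}{\SW(\outcome)} \leq 2$. Since $\outcome$ is core stable, it participates in the minimum defining $\cpos(\ga(G))$, and therefore $\cpos(\ga(G)) \leq \frac{\SW(\outcome^*(\ga(G)))}{\SW(\outcome)} \leq 2$, which is exactly the claim.

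There is essentially no real obstacle here beyond correctly tracking the direction of the optimization. Because the price of stability measures the \emph{best} core stable outcome, a good upper bound on the ratio for one cleverly constructed core outcome immediately upper-bounds $\cpos$; the same argument would say nothing about the core price of anarchy, which is a maximum and thus would require bounding \emph{every} core outcome. The only mild point worth double-checking is that the ratio is well defined, i.e., that $\SW(\outcome)>0$ whenever $\SW(\outcome^*(\ga(G)))>0$: this is immediate, since if $\SW(\outcome^*(\ga(G)))>0$ then $G$ contains an edge of positive weight, so the first phase of the algorithm of Theorem~\ref{thm:2-apx_weighted_K2} selects an edge contributing positively to $\SW(\outcome)$; and if $\SW(\outcome^*(\ga(G)))=0$ the ratio is trivially $1$.
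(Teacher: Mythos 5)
Your proof is correct and is exactly the argument the paper intends: the corollary is stated as a direct consequence of Theorem~\ref{thm:2-apx_weighted_K2}, using the core stable outcome constructed there as a witness for the minimum defining $\cpos$. The extra remark on well-definedness of the ratio is a harmless bonus.
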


We now show a matching lower bound on the $\cpos$ for the case of weighted graphs.
\begin{theorem}
For any $\epsilon>0$, there exists a weighted graph $G$ such that $\cpos(\ga(G)) \geq 2-\epsilon$.
\end{theorem}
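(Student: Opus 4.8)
The plan is to exhibit a very small weighted graph---a path on four vertices---in which the social optimum is a perfect matching into the two ``light'' edges, while every core stable outcome is forced to keep the single heavier central edge as an isolated pair, thereby losing almost a factor of two. Concretely, I would take $G$ to be the path $v_1 - v_2 - v_3 - v_4$ with $w_{v_1,v_2}=w_{v_3,v_4}=1$ and $w_{v_2,v_3}=1+\delta$, where $\delta>0$ is a small parameter to be fixed at the end as a function of $\epsilon$.

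First I would pin down the optimum. Since the only edges are the three path edges, a short case check over the few partitions shows that the outcome $\outcome^*$ pairing $v_1$ with $v_2$ and $v_3$ with $v_4$ has $\SW(\outcome^*)=2+2=4$, beating the grand coalition (whose welfare is $\frac{2(3+\delta)}{3}\approx 2$), the central pair $\{v_2,v_3\}$ (welfare $2(1+\delta)$), and the mixed triples (welfare $2+\delta$); hence $\SW(\outcome^*)=4$.

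The crux of the argument is to characterize the core. The key observation is that both $v_2$ and $v_3$ attain their maximum attainable utility $\delta^{v_2}_{max}=\delta^{v_3}_{max}=1+\delta$ \emph{only} inside the isolated pair $\{v_2,v_3\}$: in any other coalition the heavy edge is averaged with a lighter (or zero) edge, so for instance $u_{v_2}(\{v_2,v_3,x\})=\frac{(1+\delta)+w_{v_2,x}}{2}\le 1+\frac{\delta}{2}<1+\delta$. Consequently, in any outcome that does not place $v_2$ and $v_3$ together as the coalition $\{v_2,v_3\}$, both of them have utility strictly below $1+\delta$, so $T=\{v_2,v_3\}$ strictly blocks; in particular the optimal matching $\outcome^*$ is \emph{not} core stable. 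Conversely, once $\{v_2,v_3\}$ is fixed, the endpoints $v_1$ and $v_4$ have no remaining positive-weight partner willing to improve, since their only neighbours $v_2,v_3$ are already at their individual maxima and hence cannot strictly improve in any coalition containing $v_1$ or $v_4$. Therefore the \emph{unique} core stable outcome is $\outcome=\{\{v_2,v_3\},\{v_1\},\{v_4\}\}$, with $\SW(\outcome)=2(1+\delta)$.

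Combining the two steps gives $\cpos(\ga(G))=\SW(\outcome^*)/\SW(\outcome)=\frac{4}{2+2\delta}=\frac{2}{1+\delta}$, and choosing $\delta\le\frac{\epsilon}{2-\epsilon}$ yields $\cpos(\ga(G))\ge 2-\epsilon$, matching the upper bound $\cpos\le 2$. The main obstacle is the core characterization, namely ruling out every alternative core stable outcome; I expect the clean observation above---that $v_2$ and $v_3$ sit at their individual maxima, attainable only as an isolated pair, which simultaneously forces that pair and freezes the two endpoints---to dispatch all cases at once, leaving only the routine optimum computation.
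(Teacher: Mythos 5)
Your proposal is correct and follows essentially the same route as the paper: the same four-vertex path with edge weights $1$, $1+\delta$, $1$, the same observation that the central pair must form its own coalition in any core stable outcome (forcing welfare $2(1+\delta)$ against an optimum of $4$), and the same conclusion after tuning $\delta$ as a function of $\epsilon$. Your core characterization is in fact spelled out in slightly more detail than the paper's (which asserts uniqueness with a brief "it is easy to check"); the only nit is that $\{\{v_1,v_4\},\{v_2,v_3\}\}$ is also core stable, so "unique" is a slight overstatement shared with the paper, but it is harmless since every core stable outcome has the same social welfare.
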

\begin{proof}
Consider the graph $G$ represented in Figure~\ref{fig:path}.

\begin{figure}[h]
\centering
 \scalebox{1}{  \begin{tikzpicture}[font=\footnotesize,-,>=stealth',shorten >=1pt,auto,node distance=3cm,
  thick,main node/.style={circle,draw}]
  \tikzset{edge/.style = {-,> = latex'}}

\node (4) at (3.5, 0) {$i_1$};
  \node[main node] (4) at (3.5,0) {};
\node (5) at (4.5, 0) {$i_2$};
  \node[main node] (5) at (4.5,0) {};
\node (6) at (5.5, 0) {$i_3$};
  \node[main node] (6) at (5.5,0) {};
\node (7) at (6.5, 0) {$i_4$};
  \node[main node] (7) at (6.5,0) {};

\draw[edge] (4)  to node[sloped, anchor=center, above]  {$1$} (5);
\draw[edge] (5)  to node[sloped, anchor=center, above] {$1+\frac{\epsilon}{2}$} (6);  
\draw[edge] (6)  to node[sloped, anchor=center, above] {$1$}(7);

\end{tikzpicture}
}
  \caption{Graph $G$.}\label{fig:path}
\end{figure}
On the one hand, it is easy to check that the only core stable coalition $\outcome$ is the one where the two central agents $i_2$ and $i_3$ are together in the same coalition, while agent $i_1$, as well as agent $i_4$, are alone in different coalitions, i.e.,  $\outcome=\{\{i_1\},\{i_2, i_3\},\{i_4\}\}$. Notice that $\SW(\outcome)=2\left(1+\frac{\epsilon}{2}\right)$. On the other hand, the outcome $\outcome'=\{\{i_1, i_2\},\{i_3, i_4\}\}$, has a social welfare equal to $4$, and therefore $\SW(\outcome^*)\geq 4$. It follows that $\cpos(\ga(G)) \geq \frac{4}{2\left(1+\frac{\epsilon}{2}\right)}\geq 2-\epsilon$.
\end{proof}

For unweighted graphs, it is easy to see that the optimum outcome produced in Theorem \ref{thm:calcoloStrongNash} is also core stable, and therefore the following proposition holds:

\begin{proposition}
For any unweighted graph $G$, $\cpos(\ga(G))=1$.
\end{proposition}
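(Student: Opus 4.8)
The plan is to produce a single outcome that is simultaneously socially optimal and core stable; combined with the trivial bound $\cpos(\ga(G)) \geq 1$ that holds for every game (each ratio $\SW(\outcome^*)/\SW(\outcome)$ is at least $1$ by optimality of $\outcome^*$), this immediately yields $\cpos(\ga(G)) = 1$. The natural candidate is the outcome $\outcome$ furnished by Theorem~\ref{thm:calcoloStrongNash}, which is an $n$-strong Nash stable outcome, i.e.\ $\outcome \in \sne[n]$, satisfying $\SW(\outcome) = \SW(\outcome^*)$. Thus the whole argument reduces to verifying that this particular $\outcome$ belongs to $\core(\ga(G))$.

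The key step is the general observation that every $n$-strong Nash stable outcome is core stable. I would prove this by contradiction. Suppose some coalition $T$ strongly blocks $\outcome$, so that every agent $i \in T$ strictly improves her utility in coalition $T$ with respect to $\outcome(i)$. I would then realise $T$ as a single admissible joint deviation within the strategy space. Since the $n-|T|$ agents of $N \setminus T$ occupy at most $n-|T|$ of the $n$ candidate coalitions, at least $|T| \geq 1$ candidate indices are used by no agent outside $T$; pick one such index $j$. If all agents of $T$ simultaneously switch their strategy to $j$ (the agents outside $T$ staying put), the resulting coalition is exactly $C_j = T$, and by the blocking assumption every member of $T$ strictly increases her utility. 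This is precisely a strong Nash deviation of the set $K = T$ with $|K| \leq n$, contradicting $\outcome \in \sne[n]$. Hence no strongly blocking coalition exists and $\outcome$ is core stable.

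Putting the pieces together, $\outcome$ is both core stable and optimal, so $\cpos(\ga(G)) \leq \SW(\outcome^*)/\SW(\outcome) = 1$, which together with the lower bound gives the claim. Note that this establishes the one-directional inclusion $\sne[n](\ga(G)) \subseteq \core(\ga(G))$, which uses nothing specific to MFHGs beyond the structure of the strategy space.

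The only genuine subtlety, and the step I expect to require the most care, is checking that a core deviation can always be re-expressed as an admissible strong Nash joint move. In the core the blocking set $T$ must form \emph{exactly} the coalition $T$, whereas a strong Nash deviation merely asks each deviator to select some candidate coalition; the counting argument above guarantees a candidate index free of outside agents into which $T$ can relocate to form precisely $T$, so the two notions align and the reduction goes through. Once this correspondence is in place, the statement follows with essentially no further computation, since all the hard work (optimality together with $n$-strong stability of $\outcome$) is already supplied by Theorem~\ref{thm:calcoloStrongNash}.
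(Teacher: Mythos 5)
Your proof is correct and follows essentially the same route as the paper: both take the optimal $n$-strong Nash stable outcome supplied by Theorem~\ref{thm:calcoloStrongNash} and observe that it is also core stable (the paper states this step as ``easy to see''). Your general argument that every $n$-strong Nash stable outcome is core stable --- relocating a blocking coalition $T$ into a candidate index used by no agent of $N \setminus T$, which exists by counting --- is a valid and clean way to make that observation rigorous.
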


We are also able to prove a constant upper bound to the core price of anarchy.

\begin{theorem}
For any graph $G$, $\cpoa(\ga(G))\leq 4$.
\end{theorem}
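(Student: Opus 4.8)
The plan is to fix an arbitrary core stable outcome $\outcome \in \core(\ga(G))$ and a social optimum $\outcome^*$, and to charge the welfare of $\outcome^*$ against that of $\outcome$ \emph{edge by edge}, exploiting two consequences of core stability. First I would record that every agent has nonnegative utility in $\outcome$: if some agent $i$ had $u_i(\outcome)<0$, then the singleton coalition $\{i\}$ (of utility $0$) would strongly block $\outcome$, contradicting core stability. Second, I would record the \emph{edge property}: for every edge $(i,j)\in E$ with $w_{i,j}>0$ we have $\max\{u_i(\outcome),u_j(\outcome)\}\geq w_{i,j}$, since otherwise the coalition $\{i,j\}$, which gives each of $i$ and $j$ utility exactly $w_{i,j}$, would strongly block $\outcome$.

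Next I would bound $\SW(\outcome^*)$ one optimal coalition at a time. Using $\SW(C^*)=\frac{2W(E_{C^*})}{|C^*|-1}$ and discarding the (nonpositive) contribution of negative edges, it suffices to control the total weight of positive internal edges of $C^*$. For a coalition $C^*$ with $s=|C^*|\geq 2$, I would orient each positive internal edge toward one endpoint satisfying the edge property (a ``happy'' endpoint), and let $A_i$ be the set of edges charged to $i$. Since $w_e\leq u_i(\outcome)$ for every $e\in A_i$, since $|A_i|\leq s-1$ (node $i$ has at most $s-1$ neighbours inside $C^*$), and since all utilities are nonnegative,
$$\sum_{e\in E_{C^*}:\,w_e>0} w_e \;=\; \sum_{i\in C^*}\ \sum_{e\in A_i} w_e \;\leq\; \sum_{i\in C^*}|A_i|\,u_i(\outcome)\;\leq\;(s-1)\sum_{i\in C^*}u_i(\outcome).$$
Dividing by $s-1$ and multiplying by $2$ yields $\SW(C^*)\leq 2\sum_{i\in C^*}u_i(\outcome)$ (singletons contribute $0$, so they are harmless).

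Finally, because the coalitions $C^*$ partition $N$, summing the per-coalition inequality gives $\SW(\outcome^*)\leq 2\sum_{i\in N}u_i(\outcome)=2\,\SW(\outcome)$, which in particular establishes the claimed $\cpoa(\ga(G))\leq 4$ (indeed with room to spare). The step I expect to be the crux is precisely this per-coalition accounting: the naive idea of charging each agent's $\delta^i_{max}$ to its heaviest neighbour \emph{fails}, because a single popular node can be the heaviest neighbour of arbitrarily many others (e.g.\ the centre of a star), so no constant bound survives. The bound must instead be obtained \emph{inside} each optimal coalition, where the number of edges that one happy endpoint can absorb is capped by $s-1$ and is therefore exactly compensated by the $\frac{1}{s-1}$ normalization of the modified-fractional utility. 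Care is also needed with negative weights: it is the nonnegativity of utilities (itself guaranteed by the singleton-deviation argument) that validates both discarding the negative edges and replacing $\sum_{i}|A_i|\,u_i(\outcome)$ by $(s-1)\sum_i u_i(\outcome)$.
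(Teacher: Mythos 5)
Your proof is correct, and it takes a genuinely different route from the paper's. The paper reaches the bound of $4$ in two hops: it first constructs, via a greedy matching, a core stable outcome $\outcome'$ whose nontrivial coalitions are all pairs and which satisfies $\SW(\outcome^*)\leq 2\,\SW(\outcome')$, and then argues that for each matched pair $\{i,j\}$ of $\outcome'$ at least one endpoint must have utility at least $w_{i,j}$ in any core stable $\outcome$, losing a second factor of $2$. You instead charge the optimum directly against the core stable outcome, coalition by coalition, using exactly the two consequences of core stability you isolate (nonnegativity of utilities via singleton deviations, and the pairwise edge property via two-agent deviations). The crucial accounting step --- a happy endpoint can absorb at most $s-1$ edges of a size-$s$ optimal coalition, which is exactly compensated by the $1/(s-1)$ normalization of the modified fractional utility --- is sound, and the resulting inequality $\SW(\outcome^*)\leq 2\,\SW(\outcome)$ is a strict improvement over the statement: it shows $\cpoa(\ga(G))\leq 2$ for every weighted graph, matching the paper's $2-\epsilon$ lower bound on the core price of stability and thereby closing the gap between $2$ and $4$ that the paper lists as an open problem in its conclusions. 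I checked the steps most likely to fail --- the convention that a singleton coalition yields utility $0$ (which underpins both the nonnegativity argument and the harmlessness of $s=1$ coalitions), the fact that the classes $A_i$ partition the positive edges of $E_{C^*}$ among members of $C^*$, and the use of $u_i(\outcome)\geq 0$ when replacing $|A_i|$ by $s-1$ --- and none of them breaks.
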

\begin{proof}
Let $\outcome'$ be the solution computed by Theorem \ref{thm:2-apx_weighted_K2}, in which all coalitions have cardinality at most $2$.

Consider any core stable outcome $\outcome$.

For any coalition $C'=\{i,j\}$ of $\outcome'$ isomorphic to $K_2$, on the one hand we have that $\SW(C')=2$. On the other hand, since $\outcome$ is
 a core stable outcome, $u_i(\outcome)=1$ or $u_j(\outcome)=1$, because otherwise coalition $\{i,j\}$ would strongly block outcome $\outcome$.
Thus, $u_i(\outcome) + u_j(\outcome)\geq 1$, whereas $u_i(\outcome') + u_j(\outcome') =2$. 

Let $N' \subseteq N$ be such that for any $j \in N'$, $C'_j$ is isomorphic to $K_2$.
Since agents being in all other coalitions of $\outcome'$ do not contribute to $\SW(\outcome')$, we obtain 
\begin{eqnarray*}
\frac{\SW(\outcome')}{\SW(\outcome)}& \leq &
\frac{\sum_{j \in N'} \SW(C'_j)}
{\sum_{j \in N'} \sum_{i \in C'_j}{u_i(\outcome)}}\\ 
&\leq&  
\frac{\sum_{j \in N_2} \SW(C'_j) }
{\sum_{j \in N_2} \frac{1}{2} \SW(C'_j) }=2.%\\
%&=&2. 
\end{eqnarray*} 
The claim follows because, by Lemma \ref{thm:2-apx_weighted_K2}, $\SW(\outcome^*(\ga(G))) \leq 2 \cdot \SW(\outcome')$.
\end{proof}

For unweighted graphs we get the following tight characterization on the core price of anarchy.

\begin{proposition}
For any unweighted graph $G$, $\cpoa(\ga(G))=2$.
\end{proposition}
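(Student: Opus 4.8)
The plan is to prove the two matching bounds $\cpoa(\ga(G)) \leq 2$ and $\cpoa(\ga(G)) \geq 2$ separately. The upper bound reuses almost verbatim the machinery of Theorem~\ref{thm:strong_price_of_anarchy_at_most_2}, while the lower bound needs a small explicit construction.

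For the upper bound I would mimic the proof of Theorem~\ref{thm:strong_price_of_anarchy_at_most_2}, replacing the hypothesis of $2$-strong Nash stability with that of core stability. Start from an optimal outcome $\outcome^*$ all of whose coalitions are \emph{basic} (Theorem~\ref{thm:unweightedSocialOpt}) and fix any core stable $\outcome$. For a coalition $C^*=\{i,j\}$ of $\outcome^*$ isomorphic to $K_2$ we have $(i,j)\in E$; if both $u_i(\outcome)<1$ and $u_j(\outcome)<1$, then the pair $\{i,j\}$ would strictly improve both agents (each reaching utility $1$) and hence strongly block $\outcome$, a contradiction, so $u_i(\outcome)+u_j(\outcome)\geq 1=\frac{1}{2}\SW(C^*)$. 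For a coalition $C^*=\{i,j,k\}$ isomorphic to $K_3$ all three pairs are edges; if at most one vertex had utility $1$, two of them, say $i,j$, would have utility strictly below $1$, and $\{i,j\}$ would strongly block, a contradiction, so at least two vertices have utility $1$ and $u_i(\outcome)+u_j(\outcome)+u_k(\outcome)\geq 2=\frac{2}{3}\SW(C^*)$. Summing over the $K_2$- and $K_3$-coalitions exactly as in Theorem~\ref{thm:strong_price_of_anarchy_at_most_2} (vertices of $K_1$-coalitions contribute nothing to $\SW(\outcome^*)$) and using $\frac{2}{3}\geq\frac{1}{2}$ yields $\SW(\outcome^*)/\SW(\outcome)\leq 2$.

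For the lower bound I would exhibit a single unweighted graph attaining the ratio $2$ exactly, namely the path $P_4$ on vertices $i_1,i_2,i_3,i_4$. Its optimum is the perfect matching $\{\{i_1,i_2\},\{i_3,i_4\}\}$ with $\SW(\outcome^*)=4$, whereas I claim that $\outcome=\{\{i_1\},\{i_2,i_3\},\{i_4\}\}$, with $\SW(\outcome)=2$, is core stable, giving $\cpoa(\ga(G))\geq 4/2=2$. The verification of core stability is the only point requiring care: agents $i_2$ and $i_3$ already enjoy utility $1$, the maximum attainable, so they cannot be members of any strongly blocking coalition; consequently a blocking set could only be formed from $\{i_1,i_4\}$, which induce an independent set, so no agent in it could reach positive utility. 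Hence $\outcome$ admits no strongly blocking coalition.

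Alternatively, the lower bound already follows from Theorem~\ref{thm:strong_price_of_anarchy_at_least_2} together with the observation that every $n$-strong Nash stable outcome is core stable (a core blocking coalition $T$ is realised as a joint deviation of the agents of $T$ to a commonly chosen candidate coalition containing no non-member, which always exists because the $n-|T|$ agents outside $T$ occupy fewer than $n$ of the $n$ candidate coalitions); this gives $\cpoa(\ga(G))\geq\spoa[n](\ga(G))\geq 2-\epsilon$ for every $\epsilon>0$. I prefer the direct $P_4$ construction because it attains the value $2$ exactly. The main obstacle throughout is not the arithmetic but the stability check in the lower bound, and its crux is the familiar principle already exploited in Theorem~\ref{thm:2-apx_weighted_K2}: an agent who has reached the globally maximal utility $1$ is effectively ``frozen'' and can never belong to a strongly blocking coalition, which forces any candidate blocking set into an independent set of useless vertices.
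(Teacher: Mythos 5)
Your proposal is correct and follows essentially the same route as the paper: the lower bound is the identical $P_4$ construction with the core stable outcome $\{\{i_1\},\{i_2,i_3\},\{i_4\}\}$ of welfare $2$ against the optimum of welfare $4$, and the upper bound is the same adaptation of the argument of Theorem~\ref{thm:strong_price_of_anarchy_at_most_2}, noting that a $K_2$ or $K_3$ coalition of the basic optimum whose members all had utility below $1$ would contain a strongly blocking pair. You merely spell out the details (and an alternative lower bound via $\sne[n]\subseteq\core$) that the paper leaves implicit.
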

\begin{proof}
For the lower bound, it is easy to see that, given an unweighted path of four nodes $i_1,i_2,i_3,i_4$, the outcome $\outcome=\{\{i_1\}, \{i_2,i_3\}, \{i_4\}\}$ is core stable and has social welfare $2$, while the optimum outcome $\outcome^*=\{\{i_1,i_2\},\{i_3,i_4\}\}$ has social welfare $4$.  
A matching upper bound can be obtained by exploiting the same arguments used in the proof of Theorem \ref{thm:strong_price_of_anarchy_at_most_2}.
\end{proof}

\section{Conclusions}\label{sec:conclusion}
%In this paper we have considered Strong Nash stable, Nash Stable, and core stable outcomes for modified fractional hedonic games, a natural class of succinctly representable hedonic games. We showed conditions under which such natural stable outcomes are non-empty and provide polynomial time algorithms for computing them. We further analyzed performance of Nash stable, strong Nash Stable, and core stable outcomes, by means of the widely used notions of price of anarchy (resp. strong price of anarchy and core price of anarchy), and price of stability (resp. strong price  of stability and core price of stability), and showed tight (resp. asymptotically tight and almost tight) bounds on the performance of strong Nash stable (resp. Nash stable and core stable) outcomes.  

We notice that one could consider \emph{relaxed} strong Nash stable and \emph{strict} core outcomes, where among the agents that cooperatively deviate, all of them do not worsen their utility, and at least one of them gets a strictly better utility. However, these stable outcomes do not exist even for very simple instances. In fact, if $G$ is an unweighted path of $3$ nodes, $(\ga(G))$ admits no relaxed strong Nash stable outcomes as well as no strict core outcomes. 

There are some open problems suggested by our work. First of all, it would be nice to close the gap between the lower bound of $2$ for the core price of stability and the upper bound of $4$ for the core price of anarchy, and to study the complexity of computing an optimal outcome when the graph is weighted.
%Moreover, it could be worth trying to extend our results to the more general setting of directed graphs. 
Another research direction could be that of designing truthful mechanisms for MFHGs that perform well with respect to the sum of the agents' utility.
Finally, it would be interesting to adopt different social welfare than the one considered in this paper. An example could be that of maximizing the minimum utility among the agents.

%%%%%%%%%%%%%%%%%%%%%%%%%%%%%%%%%%%%%%%%%%%%%%%%%%%%%%%%%%%%%%%%%%%%%%%%%%%%%%%%%%%%%%%%%%%%%%%%%%%%%%%%%
%% bibliography: see CFP for number of permitted pages

\newpage

\bibliographystyle{plainurl}  % do not change this line!
\bibliography{bibliography}  % put name of your .bib file here

\begin{thebibliography}{10}

\bibitem{ABBHOP17}
Haris Aziz, Florian Brandl, Felix Brandt, Paul Harrenstein, Martin Olsen, and
  Dominik Peters.
\newblock Fractional hedonic games.
\newblock {\em CoRR}, abs/1705.10116, 2017.

\bibitem{ABH2014}
Haris Aziz, Felix Brandt, and Paul Harrenstein.
\newblock Fractional hedonic games.
\newblock In {\em Proceedings of the 13th International Conference on
  Autonomous Agents and Multiagent Systems (AAMAS)}, pages 5--12, 2014.

\bibitem{ABS11}
Haris Aziz, Felix Brandt, and Hans~Georg Seedig.
\newblock Stable partitions in additively separable hedonic games.
\newblock In {\em Proceedings of the Tenth International Conference on
  Autonomous Agents and Multiagent Systems (AAMAS)}, pages 183--190, 2011.

\bibitem{AGGMT15}
Haris Aziz, Serge Gaspers, Joachim Gudmundsson, Juli{\'a}n Mestre, and Hanjo
  T{\"a}ubig.
\newblock Welfare maximization in fractional hedonic games.
\newblock In {\em Proceedings of the 24th International Joint Conference on
  Artificial Intelligence (IJCAI)}, pages 25--31, 2015.

\bibitem{B04}
Coralio Ballester.
\newblock Np-completeness in hedonic games.
\newblock {\em Games and Economic Behavior}, 49(1):1--30, 2004.

\bibitem{BFMO17}
Alkida Balliu, Michele Flammini, Giovanna Melideo, and Dennis Olivetti.
\newblock Nash stability in social distance games.
\newblock In {\em Proceedings of the Thirty-First AAAI Conference on Artificial
  Intelligence (AAAI)}, pages 342--348, 2017.

\bibitem{BFO17}
Alkida Balliu, Michele Flammini, and Dennis Olivetti.
\newblock On pareto optimality in social distance games.
\newblock In {\em Proceedings of the Thirty-First AAAI Conference on Artificial
  Intelligence (AAAI)}, pages 349--355, 2017.

\bibitem{BKS01}
Suryapratim Banerjee, Hideo Konishi, and Tayfun S{\"o}nmez.
\newblock Core in a simple coalition formation game.
\newblock {\em Social Choice and Welfare}, 18:135--153, 2001.

\bibitem{BFFMM14}
Vittorio Bil{\`o}, Angelo Fanelli, Michele Flammini, Gianpiero Monaco, and Luca
  Moscardelli.
\newblock Nash stability in fractional hedonic games.
\newblock In {\em Proceedings of the 10th International Conference on Web and
  Internet Economics (WINE)}, pages 486--491, 2014.

\bibitem{BFFMM15}
Vittorio Bil{\`o}, Angelo Fanelli, Michele Flammini, Gianpiero Monaco, and Luca
  Moscardelli.
\newblock On the price of stability of fractional hedonic games.
\newblock In {\em Proceedings of the 2015 International Conference on
  Autonomous Agents and Multiagent Systems, (AAMAS)}, pages 1239--1247, 2015.

\bibitem{BJ02}
Anna Bogomolnaia and Matthew~O. Jackson.
\newblock The stability of hedonic coalition structures.
\newblock {\em Games and Economic Behavior}, 38:201--230, 2002.

\bibitem{BBS15}
Florian Brandl, Felix Brandt, and Martin Strobel.
\newblock Fractional hedonic games: individual and group stability.
\newblock In {\em Proceedings of the 14th International Conference on
  Autonomous Agents and Multiagent Systems, (AAMAS)}, pages 1219--1227, 2015.

\bibitem{CGW05}
Moses Charikar, Venkatesan Guruswami, and Anthony Wirth.
\newblock Clustering with qualitative information.
\newblock {\em Journal of Computer and System Sciences}, 71(3):360--383, 2005.

\bibitem{DEFI06}
Erik~D. Demaine, Dotan Emanuel, Amos Fiat, and Nicole Immorlica.
\newblock Correlation clustering in general weighted graphs.
\newblock {\em Theoretical Computer Science}, 361(2-3):172--187, 2006.

\bibitem{DP94}
Xiaotie Deng and Christos~H. Papadimitriou.
\newblock On the complexity of cooperative solution concepts.
\newblock {\em Mathematics of Operations Research}, 19(2):257--266, 1994.

\bibitem{DG1980}
Jacques~H. Dr{\'e}ze and Joseph Greenberg.
\newblock Hedonic coalitions: optimality and stability.
\newblock {\em Econometrica}, 48:987--1003, 1980.

\bibitem{EFF2016}
Edith Elkind, Angelo Fanelli, and Michele Flammini.
\newblock Price of pareto optimality in hedonic games.
\newblock In {\em Proceedings of The Thirtieth AAAI Conference on Artificial
  Intelligence (AAAI)}, pages 475--481, 2016.

\bibitem{FLN15}
Moran Feldman, Liane Lewin-Eytan, and Joseph~(Seffi) Naor.
\newblock Hedonic clustering games.
\newblock {\em { ACM Transactions on Parallel Computing}}, 2(1):4:1--4:48,
  2015.

\bibitem{FMMSZ18}
Michele Flammini, Gianpiero Monaco, Luca Moscardelli, Mordechai Shalom, and
  Shmuel Zaks.
\newblock Online coalition structure generation in graph games.
\newblock In {\em Proceedings of the 17th International Conference on
  Autonomous Agents and Multiagent Systems (AAMAS)}, 2018.

\bibitem{FMZ17}
Michele Flammini, Gianpiero Monaco, and Qiang Zhang.
\newblock Strategyproof mechanisms for additively separable hedonic games and
  fractional hedonic games.
\newblock In {\em Proceedings of The 15th Workshop on Approximation and Online
  Algorithms (WAOA)}, 2017.

\bibitem{HHKS13}
Tobias Harks, Martin Hoefer, Max Klimm, and Alexander Skopalik.
\newblock Computing pure nash and strong equilibria in bottleneck congestion
  games.
\newblock {\em Mathematical Programming}, 141(1-2):193--215, October 2013.

\bibitem{Hell84}
Pavol Hell and David~G. Kirkpatrick.
\newblock Packings by cliques and by finite families of graphs.
\newblock {\em Discrete Math.}, 49(1), 1984.

\bibitem{KKP16}
Christos Kaklamanis, Panagiotis Kanellopoulos, and Konstantinos Papaioannou.
\newblock The price of stability of simple symmetric fractional hedonic games.
\newblock In {\em Proceedings of the 9th International Symposium on Algorithmic
  Game Theory (SAGT)}, pages 220--232, 2016.

\bibitem{O09}
Martin Olsen.
\newblock Nash stability in additively separable hedonic games and community
  structures.
\newblock {\em Theory of Computing Systems}, 45(4):917--925, 2009.

\bibitem{O12}
Martin Olsen.
\newblock On defining and computing communities.
\newblock In {\em Proceedings of the Conferences in Research and Practice in
  Information Technology}, pages 97--102, 2012.

\bibitem{P16}
Dominik Peters.
\newblock Graphical hedonic games of bounded treewidth.
\newblock In {\em Proceedings of The Thirtieth AAAI Conference on Artificial
  Intelligence (AAAI)}, pages 586--593, 2016.

\bibitem{PE15}
Dominik Peters and Edith Elkind.
\newblock Simple causes of complexity in hedonic games.
\newblock In {\em Proceedings of the 24th International Joint Conference on
  Artificial Intelligence (IJCAI)}, pages 617--623, 2015.

\bibitem{WV2015}
Mason Wright and Yevgeniy Vorobeychik.
\newblock Mechanism design for team formation.
\newblock In {\em Proceedings of the Twenty-Ninth Conference on Artificial
  Intelligence (AAAI)}, pages 1050--1056, 2015.

\end{thebibliography}

\end{document}